\documentclass{llncs}
\usepackage[german,english]{babel}
\usepackage{times,theorem}
\usepackage{latexsym,color,graphics}

\usepackage{epsfig}
\usepackage{amssymb}
\usepackage{amsmath}
\usepackage{amsfonts}
\usepackage{xspace}
\usepackage{graphicx}
\begingroup
\catcode`\~=11
\gdef\urltilde{\lower 0.6ex\hbox{~}}
\endgroup

\newcommand{\A}{\mathcal{A}} \newcommand{\B}{\mathcal{B}}
 \newcommand{\D}{\mathcal{D}}

\newcommand{\I}{\mathcal{I}} 
\newcommand{\K}{\mathcal{K}} \renewcommand{\L}{\mathcal{L}}
 \newcommand{\N}{\mathcal{N}}
 \renewcommand{\P}{\mathcal{P}}
 
 \newcommand{\T}{\mathcal{T}}
 \newcommand{\V}{\mathcal{V}}
\newcommand{\W}{\mathcal{W}}

\title{Neuro-Symbolic Strong-AI Robots  with  Closed Knowledge Assumption: Learning and Deductions}
\author{Zoran Majki\'c}
\authorrunning{Zoran Majki\'c}
\institute{ISRST, Tallahassee, FL, USA\\
\email{majk.1234@yahoo.com}}

\newtheorem{coro}{Corollary}

\begin{document}
\maketitle              

\begin{abstract}
Knowledge representation formalisms are aimed to represent general conceptual information and are typically used in the construction of the knowledge base of reasoning agent. A knowledge base can be thought of as representing the beliefs of such an agent. Like a child, a strong-AI (AGI) robot would have to learn through input and experiences, constantly progressing and advancing its abilities over time. Both with statistical AI generated by neural networks we need also the concept of \textsl{causality} of events traduced into directionality of logic entailments and deductions in order to give to robots the emulation of human intelligence. Moreover, by using the axioms we can guarantee the \textsl{controlled security} about robot's actions based on logic inferences.\\
  For AGI  robots we consider the 4-valued Belnap's bilattice of truth-values  with knowledge ordering as well, where the value "unknown" is the bottom value, the sentences with this value are indeed unknown facts, that is, the missed knowledge in the AGI robots. Thus, these unknown facts are not part of the robot's knowledge database, and by learn through input and experiences, the robot's knowledge would be naturally expanded over time. \\
  Consequently, this phenomena can be represented by the  Closed Knowledge Assumption and Logic Inference provided by this paper.
  Moreover, the truth-value "inconsistent", which is the top value in the knowledge ordering of Belnap's bilattice, is necessary for strong-AI robots to be able to support such inconsistent information and paradoxes, like Liar paradox, during deduction processes.
 \end{abstract}
Keywords: AGI, Belnap's bilattice, First-order Logic, Robotics

\section{Belnap's 4-valued Logic for AGI Robots}
Many-valued logic was conceived as a logic for uncertain, incomplete
and possibly inconsistent information which is very close to the
statements containing the words "necessary" and "possible", that is,
to the statements that make an assertion about the \emph{mode of
truth} of some other statement. \emph{Algebraic} semantics interprets modal connectives as operators, while \emph{Relational} semantics uses
relational structures, often called Kripke models, whose elements
are thought of variously as being possible worlds; for example,
moments of time, belief situations, states of a computer, etc.. The
two approaches are closely related: the subsets of relational
structures form an algebra with modal operators, while conversely
any modal algebra can be embedded into an algebra of subsets of a
relational structure via extensions of Stone's
representation theory.

Many classical as well as non-classical techniques for modeling the reasoning of intelligent systems have been proposed. Several frameworks, based on a many-valued logic with truth/knowledge partial orders (and lattices), for manipulating uncertainty, incompleteness, and inconsistency have been proposed in the form of particular extensions of classical logic programming and deductive databases. Such truth/knowledge ordered lattices can be integrated into a unique structure called bilattice \cite{Beln77,Gins88,Fitt91}, and Ginsberg has shown that the same theorem prover can be used to simulate reasoning in first order logic, default logic, prioritized default logic and assumption truth maintenance system.

Bilattice theory is a ramification of multi-valued logic by
considering both truth $\leq$ and knowledge $\leq_k$  partial
orderings. Given two truth-values $x,y \in \B$, if $x \leq y$ then
$y$ is at least as true as $x$, i.e., $x \leq y$ iff $x < y$ or
$x = y$. The two operations corresponding to this ordering
(t-lattice) are the meet (greatest lower bound) $\wedge$ and the
join (least upper bound) $\vee$.
For the knowledge-ordering $x \leq_k y$ means that $y$ is more precise than $x$, i.e.,
$x \leq_k y$ iff $x <_k y$ or $x = y$ . The  operations $\otimes$
and $\oplus$ correspond to the greatest lower bound  and least upper
bound respectively in the knowledge-ordering (k-lattice). The
negation operation for these two orderings are defined as the
involution operators which satisfy De Morgan law between the join
and meet operations.
\begin{definition} (Ginsberg ~\cite{Gins88})  \label{def:billat} A bilattice $\B$ is
defined as a sextuple $(\B, \wedge, \vee, \otimes, \oplus, \neg)$,
such that:\\
1. The t-lattice $(\B, \leq, \wedge, \vee)$ and the k-lattice
$(\B, \leq_k,\otimes, \oplus)$ are both complete lattices.\\
2. $\neg:\B \rightarrow \B$ is an involution ($\neg \neg$ is the
identity) mapping such that:\\
$\neg$ is lattice homomorphism from $(\B, \wedge, \vee)$ to $(\B,
\vee, \wedge)$ and $(\B,\otimes, \oplus)$ to itself.
\end{definition}
The two partial orders $\leq$ and $\leq_k$ represent respectively how much confidence we have in the validity (truth) of a particular sentence, and how much information (knowledge) we have about it.
Notice that from this definition, the negation $\neg$ is an
antitonic operator w.r.t. the $\leq$, with $\neg 1_t = 0_t$, $\neg
0_t = 1_t$ (where $0_t, 1_t$ are the bottom and the top elements
w.r.t the truth-ordering $\leq$ respectively), but \emph{monotonic} w.r.t. the knowledge ordering $\leq_k$. This homomorphism $\neg$ inverts the truth partial
ordering, i.e., if $x\leq y$ then $\neg x \geq \neg y$, while
it preserves the knowledge preordering, i.e., if $x\leq_k y$ then
$\neg x \leq_k \neg y$.\\
A valuation of a set of sentences $\L_0$ is denoted by mapping $v^*:\L_0\rightarrow \B$. They generate a bilattice as well with partial orders $\preccurlyeq$ and $\preccurlyeq_k$ respectively:
\begin{definition} (Ginsberg ~\cite{Gins90})  \label{def:billat2} Given the valuations $v^*_1$ and $v^*_2$, we will say that $v^*_2$ is a \textsl{knowledge extension} of $v^*_1$, writing $v^*_1\preccurlyeq_k v^*_2$ if $v^*_1(\phi) \preccurlyeq_k v^*_2(\phi)$ for every sentence in $\L_0$ (analogously we have for truth extension $\preccurlyeq$ as well).\\
A valuation $v^*$ will be called \textsl{closed} if:\\
1. $v^*(\wedge_i \phi_i) \geq_k \wedge_iv^*(\phi_i)$.\\
2. (the negation of logic syntax is that of bilattice): $v^*(\neg\phi) = \neg v^*(\phi)$.\\
3. (truth preservation of entailment $\models_{v^*}$):  if $\phi\models_{v^*} \psi$, then $v^*(\phi)\leq v^*(\psi)$.
\end{definition}
 The smallest \emph{nontrivial} bilattice is Belnap's 4-valued
bilattice  ~\cite{Beln77},
 \begin{figure}
$\vspace*{-9mm}$
\centering{
 \includegraphics[scale= 1.11]{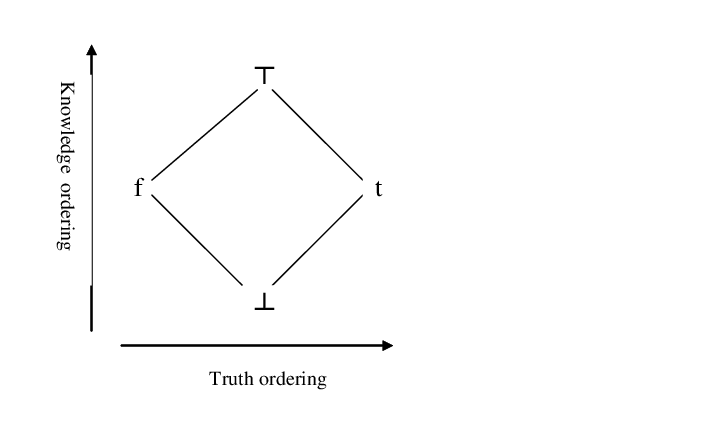}
  \caption{Belnap's bilattice}
  \label{fig:Operations}
 }
  $\vspace*{-11mm}$
 \end{figure}
 $X= \B_4 = \{t,f,\bot, \top\}$  in Fig.\ref{fig:Operations},
 with four truth-values in $X$ where $t $ is \emph{true}, $f $ is \emph{false}, $\top$ is inconsistent (both true and false) or \emph{possible }, and $\bot$ is \emph{unknown} truth-value.
In what follows  we denote by $x \bowtie y$ two unrelated elements in $X$ (so that not $(x\leq y$ or $y\leq x)$). So, Belnap's bilattice is composed by the truth lattice $(X,\leq)$ and knowledge lattice $(X,\leq_k)$,
with two natural orders:
\emph{truth} order, $\leq$, and \emph{knowledge} order, $\leq_k$,
such that $f \leq \top \leq t$, $~f \leq \bot \leq t$, $\bot
\bowtie_t \top$ and $\bot \leq_k f \leq_k \top$, $~\bot \leq_k t
\leq_k \top$, $f \bowtie_k t$. That is, bottom element for
$\leq$ ordering is $f$, and for $\leq_k$ ordering is $\bot$,
 and top element for $\leq$ ordering is $t$, and for $\leq_k$ ordering is $\top$. Meet and join operators under $\leq$ are denoted $\wedge$ and
$\vee$; they are natural generalizations of the usual conjunction
and disjunction notions. Meet and join under $\leq_k$ are denoted
$\otimes$ and $\oplus$, such that hold: $~f \otimes t = \bot$, $f
\oplus t =\top$, $\top\wedge \bot = f$ and $\top \vee \bot = t$.

In \cite{Majk07MV} we proposed a family of intuitionistic bilattices with full
truth-knowledge duality (D-bilattices) for a logic programming: the simplest of them, based on intuitionistic truth-functionally complete extension of Belnap's
4-valued bilattice, can be used in paraconsistent programming, that
is, for knowledge bases with incomplete and inconsistent
information.

There is a close relationship between logic programming and
inductive definitions which are forms of \emph{constructive}
knowledge \cite{FBry89,DeBM01}. Constructive information defines a
collection of facts through a constructive process of iterating a
recursive recipe. This recipe defines new instances of this
collection in terms of presence (and sometimes the absence) of other
facts of the collection. In the context of mathematics, constructive
information appears by excellence in inductive definitions. Not a
coincidence, inductive definitions have been studied in constructive
mathematics and intuitionistic logic, in particular in the sub-areas
of Inductive and Definition logics,
Iterated Inductive Definition logics and Fixpoint logics. \\
So, we adopt the idea of using relative pseudo
complements as implications, which has been proposed in the
literature several times and is nowadays well understood, for
distributive bilattices also. In fact, the use of relative
pseudo-complements is adopted in the fuzzy, multi-adjoint and
residuated logic programming and allows also the assignment of
weights to rules. Consequently,  here one
particular family of \emph{intuitionistc}
 bilattices, that is, distributive bilattices extended by  logic implication $ \Rightarrow $, defined as a relative pseudo complement
  $x \Rightarrow y  =\bigvee \{ z | z \wedge x \leq y \}$ (here  $\wedge$ is the meet operation, $\leq$ is the truth-ordering of a bilattice and $\bigvee$ is the l.u.b). It
has nice mathematical properties \cite{MajkA04,MajkC04,Majk05f}.
 Semantically, the basic notion is that of relative
pseudo complement from the algebraic semantics of intuitionistic
logic \cite{RaSi70}. More over, such an implication can be used for
\emph{nested} implications in bodies of rules, in order to extend
the expressive power of logic programming (see, for example, in
\cite{Mill89}). For Belnap's bilattice we obtain the following table for such logic implication:
\begin{equation} \label{tab:impl}
\begin{tabular}{|c|c|c|c|c|}
  \hline
  $\Rightarrow$ & $~~t~~~\bot ~~~f ~~~\top $   \\
  \hline
  $t $ & $~~t~~~\bot ~~~f ~~~\top $  \\
  $\bot$ & $~~t~~~~~t ~~~\top~~\top$\\
  $f$ & $~~t~~~~~t ~~~~t ~~~~t$ \\
    $\top$  & $~~t~~~\bot~~\bot~~ ~t$  \\
 \hline
\end{tabular}
\end{equation}
The truth lattice will be used for the many-valued many-sorted intensional FOL (denoted by $IFOL_B$) while the knowledge lattice can by used for estimation of the rate of growing the robots knowledge in time, as result of robot's experience and deductions. In order to simplify presentation we will use the same symbols of logic connectives of $IFOL_B$ as that of Belnap's bilattice for conjunction, disjunction, implication and negation.
So, the conjunction and disjunction of $IFOL_B$ are consequently the meet and join operators under $\leq$, denoted $\wedge$ and
$\vee$.  For the implication $\Rightarrow$ of  $IFOL_B$ we can use the \emph{relative pseudocomplements} defined above, while for the negation, the bilattice involution maping $\neg$ (which is a\emph{paraconsistent} negation that reverses the $\leq$ ordering, while preserving the $\leq_k$ ordering)\footnote{Differently from it, the negation used as pseudocomplement, $\neg_t x = x \Rightarrow f$, reverses both orderings.}: switching $f$ and $t$, leaving $\bot$ and $\top$.

The many-valued equivalence operator is defined for any two $a,b \in X$ by:
\begin{equation} \label{eq:dueM2}
a \Leftrightarrow b =  \left\{
    \begin{array}{ll}
      t, & \hbox{if $~a = b$}\\
      f, & \hbox{otherwise}
    \end{array}
  \right.
\end{equation}

 In what follows we will consider only conservative homomorphic
many-valued extensions of the classic 2-valued logic with logical
connectives $\neg, \wedge, \vee,  \Rightarrow, \Leftrightarrow$
(negation, conjunction, disjunction, implication and equivalence
respectively) and of the following algebra of truth values in the lattice $(X,\leq)$:
\begin{definition} \label{def:mv-algebra}
 For a given many-valued predicate logic,  by $\A_{B} = (X, \leq, \neg, \wedge, \vee,  \Rightarrow,\Leftrightarrow)$ we denote the algebra of the Belnap's truth values in the complete lattice $(X, \leq)$.\footnote{The classic 2-valued algebra $\A_{\textbf{2}}$ is particular case with lattice $\textbf{2} = \{f,t\}$ and if $\textbf{2} \subset X$  (the top and bottom elements in $X$).}
\end{definition}
The fact that classic 2-valued algebra is the subalgebra of this
many-valued algebra $X$, means that $X$ is a conservative
extension of the classic logic operators.
We denote by $\textbf{2} = \{f,t\}\subseteq X $ the
classical 2-valued logic lattice, which is the sublattice of the Belnap's truth-ordering lattice $(X, \leq)$ composed only by false and true truth-values $\{f,t\}$. It is easy to verify that Definition \ref{def:HerbrandValuat} is valid also for classical FOL and its 2-valued logic lattice $\textbf{2}$. However, the Belnap's bilattice has also the knowledge ordering which is very useful for estimation of the robot's knowledge, because for a robot, as default any ground atom has the unknown truth-value $\bot$.\\
\textbf{Remark}: It has been demonstrated by Lemma 1 in  \cite{Majk07MV} that the unary operator of negation $\neg$ in the Belnap's algebra $\A_B$ is the selfadjoint modal operator w.r.t the knowledge-ordering $\leq_k$, and antitonic auto-homomorphism of the truth-ordering lattice, $\neg:(X,\leq,\wedge,\vee) \rightarrow (X,\leq,\wedge,\vee)$, which represents the De Morgan laws.
 So, the algebra of truth values $\A_B$ is a \emph{modal} Heyting algebra (that is, a standard Heyting algebra extended by unary modal operator).\\
$\square$\\
The central hypothesis of cognitive science is that thinking can best be understood in terms of representational structures in the mind and computational procedures that operate on those structures. Most work in cognitive science assumes that the mind has mental representations analogous to computer data structures, and computational procedures similar to computational algorithms.

Neuro-symbolic AI attempts \cite{Vali08,BaHi05,SZEH21,PanJ23,CoRe24} to integrate neural and symbolic architectures in a manner that addresses strengths and weaknesses of each, in a complementary fashion, in order to support robust strong AI capable of reasoning, learning, and cognitive modeling. In this paper we consider the
 Intensional First Order Logic (IFOL)  as a symbolic architecture of modern robots, able to use natural languages to communicate with humans and to reason about their own knowledge with self-reference and abstraction language property.
After publication of my book \cite{Majk22} in 2022 for such IFOL, recently I applied the standard 2-valued version of it \cite{Majk23r,Majk24a,Majk24ar} to the four-level cognitive structure of AGI robots here adopted for Belnap's 4-valued bilattice.

It requires the conceptual PRP structure levels based on domain $\D$ composed by particulars in $D_{0}$ and universals (concepts) in $D_I = D_1+D_2+D_3+...$
 The universals in $D_I$ is made of: $D_1$ for many-valued logic sentences, as described in Section 2 of \cite{Majk24ar}, called \verb"L-concepts" (their extension corresponds to some logic value),   and  $D_n, n \geq 2,$ for
 \verb"concepts" (their m-extension is an n-ary relation); we consider the property (for an unary predicate) as a  concept in $D_2$.

 We are able to incorporate the emotional structure to robots as well, by a number of fuzzy-emotional partial mappings
\begin{equation} \label{eq:emotions}
E_i:\D \rightarrow [0,1]
\end{equation}
 of robots PRP intensional concepts,  for each kind of emotions $i\geq 1$: love, beauty, fear, etc. It was demonstrated \cite{Majk22} that IFOL is able to include any kind of many-valued, probabilistic and fuzzy logics as well.

Despite the best efforts over the last years, deep learning is still easily fooled \cite{NYCl15}, that is, it remains very hard to make any guarantees about how the system will behave given data that departs from the training set statistics. Moreover, because deep learning does not learn causality, or generative models of hidden causes, it remains \emph{reactive}, bound by the data it was given to explore \cite{Marc18}.

In contrast, brains act proactively and are partially driven by endogenous curiosity, that is, an internal, epistemic, consistency, and knowledge-gain-oriented drive. We learn from our actively gathered sensorimotor experiences and form conceptual, loosely hierarchically structured, compositional generative predictive models.

The robot's internal neuro-symbolic knowledge structure  \cite{Majk23r,Majk24a} is divided into four levels, in ordering: natural language $\N\L_{list}$ (parsed into conceptual structures with PRP concepts, i.e., conceptual C-structures\footnote{Following the development of advanced neural network techniques, especially the Seq2Seq model, and the availability of powerful computational resources, neural semantic parsing started emerging. Deep semantic parsing attempts to parse natural language utterances, typically by converting them to a formal meaning representation language which in our case are the expressions of the algebra of PRP concepts so defining the intensional (partial) mapping $I:\L \rightarrow D_I$ (represented completely by Frege/Russel semantics diagram of Corollary 3 in \cite{Majk25} as well. Good candidate for conceptual C-structures are AMR (Abstract Meaning Representation, and its improved models, with AMR graphs which are rooted, labeled, directed, acyclic graphs (DAGs), comprising whole sentences, by mapping their nodes and leaves into PRP intensional concepts and particulars of $IFOL_B$.} and  formal IFOL logic  with the set of formulae $\L$),  intensional conceptual PRP system and neuro system, as represented in the figure in Appendix (Section 7).

That means, for example, that we can use neural networks for robot's transformation of the spoken phrases or  visual contents presented to robots into written language phrases. For example, we can use  the end-to-end learning paradigm and design neural network architectures for the tasks of image-sentence matching, image captioning tasks and image region-annotation as indicating by the PhD thesis of Andrej Karpathy in 2016 \cite{Karp16} with novel techniques and algorithms in the neural networks modeling paradigm which offers superior results and multiple appealing practical properties by introducing the dense captioning task, which requires
the computer to both detect and describe all salient regions of an image.

To be grounded, the symbol system of $IFOL_B$ would have to be augmented with nonsymbolic, sensorimotor capacities—the capacity to interact autonomously with that world of objects, events, actions, properties and states that their symbols are systematically interpretable (by us) as referring to. It would have to be able to pick out the referents of its symbols, and its sensorimotor interactions with the world would have to fit coherently with the symbols' interpretations.

Note that, while the top line in the figure in Appendix (Section 7) is the ordinary component of the natural language grounding developed by robot's neuro system, the two lines bellow is the new robots knowledge structure of the added \emph{symbolic AI system} based on the Intensional First Order Logic and its grounding to robot's processes (its neuro AI system), by which the robot is able to provide logic deductive operations  and autoepistemic self-reasoning about its current knowledge states and communicate it to humans by using natural languages.

Thus, the robot-s cognitive system in the figure  in Appendix (section 7)combines both approaches to the linguistics: the \emph{rationalist}  Chomskyan (or
generative) linguistics approach which depends on \emph{categorical} principles, which sentences either do or do not satisfy,  and the \emph{empiricist} approach (of the British linguist J.R. Firth, for example) which is in favor to the \emph{statistical}  linguistics (probabilistic, actually implemented in LLM by neural networks, for example), better at automatic learning (knowledge induction), better at disambiguation, and also have a role in the science of linguistics.
The rationalist approach  postulates that the key parts of language are innate-hardwired in the brain at birth as part of the human genetic inheritance, mathematically modeled by formal symbolic logics, the knowledge databases and reasoning systems (knowledge deduction). Differently, statistical linguistics draws from the work of Shannon, where the aim is to assign probabilities to linguistic events, so that we can say which sentences are usual and unusual. With this approach we can learn the complicated and extensive structure of language
by specifying an appropriate general language model, and then inducing
the values of parameters by applying statistical, pattern recognition, and
machine learning methods to a large amount of language use.

Next Section is dedicated to a short description of the Many-sorted Intensional First-order logic $IFOL_B$ based on the Belnap's 4-valued bilattice of truth-values with the set of variables $\V$ and domain $\D$, provided recently in \cite{Majk25}).
\section{Introduction to 4-valued Many-sorted Intensional FOL}
The significant aspect of an expression's meaning is its \emph{extension}.
We can stipulate that the extension of a sentence is its
truth-value, and that the extension of a singular term is its
referent. The extension of other expressions can be seen as
associated entities that contribute to the truth-value of a sentence
in a manner broadly analogous to the way in which the referent of a
singular term contributes to the truth-value of a sentence.

The first conception of \emph{intensional entities} (or concepts) is built
into the \emph{possible-worlds} treatment of Properties, Relations
and Propositions (PRP)s. This conception is commonly attributed to
Leibniz, and underlies Alonzo Church's alternative formulation of
Frege's theory of senses ("A formulation of the Logic of Sense and
Denotation" in Henle, Kallen, and Langer, 3-24, and "Outline of a
Revised Formulation of the Logic of Sense and Denotation" in two
parts, Nous,VII (1973), 24-33, and VIII,(1974),135-156). This
conception of PRPs is ideally suited for treating the
\emph{modalities} (necessity, possibility, etc..) and to Montague's
definition of intension of a given virtual predicate
$\phi(x_1,...,x_k)$, as a mapping from possible worlds into
extensions of this virtual predicate. Among the possible worlds we
distinguish the \emph{actual} possible world\footnote{For example, if we
consider a set of predicates, of a given Database,
and their extensions in different time-instances, then the actual possible world is identified by the current instance of the time.}.

The second conception of intensional entities is to be found in
Russell's doctrine of logical atomism. In this doctrine it is
required that all complete definitions of intensional entities be
finite as well as unique and non-circular: it offers an
\emph{algebraic} way for definition of complex intensional entities
from simple (atomic) entities (i.e., algebra of concepts),
conception also evident in Leibniz's remarks. In a predicate logics,
predicates and open-sentences (with free variables) expresses
classes (properties and relations), and sentences express
propositions. Note that classes (intensional entities) are
\emph{reified}, i.e., they belong to the same domain as individual
objects (particulars). This endows the intensional logics with a
great deal of uniformity, making it possible to manipulate classes
and individual objects in the same language.

 In what follows any open-sentence, a formula $\phi(\textbf{x})$ with non empty tuple of free variables $\textbf{x} =(x_1,...,x_m)$, will be called a m-ary   \emph{virtual predicate}\footnote{In intensional FOL we use virtual pradicates so that their extension for a given interpretation generates a single m-ary relation as in the case of standard m-ary predicates used in FOL for atomic formulae with $m\geq 1$ free variables. Each virtual predicate generates a m-ary \emph{intensional concept} in domain $\D$ for a given intensional interpretation as standard m-ary predicates.  Most simple syntactically virtual predicates are obtained from atoms, obtained from k-ary predicate letters $p^k_i \in P$, with non empty set of variables and non empty set of constants for predicate arguments as, for example $p^6_i(t_1,t_2,x_1,t_3,x_2,t_4)$ with ground terms (or constants), $t_m$ for $1\leq m\leq 4$, generates a virtual predicate $\phi_i(x_1,x_2)$. Virtual predicate of an atom with all its arguments represented by variables is just equal to such atom.}, denoted also by $\phi(x_1,...,x_m)$. This definition contains the precise method of establishing the \emph{ordering} of variables in this tuple:
\begin{definition} \label{def:virt-predicate} \textsc{Virtual predicates:}
\emph{Virtual predicate} obtained from an open formula $\phi \in \L$
is denoted by $\phi(x_1,...,x_m)$ where $(x_1,...,x_m)$ is a particular fixed sequence of the set of all free variables in $\phi$. This definition contains the precise method of establishing the \emph{ordering} of variables in this tuple:
such an method that will be adopted here is the ordering of appearance, from left to right, of free variables in $\phi$. This method of composing the tuple of free variables is  unique and canonical way of definition of the virtual predicate from a given open formula.
\end{definition}
Virtual predicates are used to build the \emph{semantic logic structures} of logic-semantics level of any given natural language. However, with virtual predicates we need to replace the general FOL quantifier on variables $(\exists x)$ by specific existential quantifiers $\exists_i$ of the intensional logic $IFOL_B$, where $i\geq 1$ is the position of variable $x$ inside a virtual predicate. For example, the intensional FOL formula $(\exists x_k) \phi(x_i,x_j,x_k,x_l,x_m)$ will be mapped into intensional concept $\exists_3 \phi(\textbf{x})$ where $\textbf{x}$ is the list(tuple) of variables $(x_i,x_j,x_k,x_l,x_m)$.
In the same way we introduce the set of universal quantifiers $\forall_i$.  Notice that in many-valued version of intensional FOL the composed operation $\neg \exists_i \neg$ \emph{is different} from $\forall_i$. Moreover with virtual predicates we need to replace standard binary logic connectives $\wedge, \vee, \Rightarrow$ with $\wedge_S, \vee_S, \Rightarrow_S$ where $S$ is a set of pair of indices of variables in virtual predicate. So we obtain a non standard FOL syntax algebra with such expressions that use these new connectives.

We assume, as in \cite{Majk22},  that a concept algebra of many-valued logic $IFOL_B$ has a non empty domain (it is different from the domain of 2-valued standard IFOL) $~\D = D_{0} + D_I$, (here $+$ is a disjoint union) where a subdomain $D_{0}$ is made of  particulars or individuals (we denote by $\circledR$ the non-meaning individuals, for interpretation of language entities that are no meaningful, as "Unicorn" for example) with $X \subseteq D_0$.

The rest $D_I = D_1 + D_2 ...+ D_n ...$ is made of
 universals (concepts): $D_1$ for many-valued logic sentences, as described in Section 2 of \cite{Majk24ar}, called \verb"L-concepts" (their extension corresponds to some logic value),   and  $D_n, n \geq 2,$ for
 \verb"concepts" (their m-extension is an n-ary relation); we consider the property (for an unary predicate) as a  concept in $D_2$. The  concepts in $\D_I$ are denoted by $u,v,...$, while the  values (individuals) in $D_0$ by $a,b,...$

 \begin{definition} \label{def:m-sortedExtensions} \textsc{m-sorted extensions}:\\
 We define  the set of all m-extensions in the many-sorted framework of universals in $D_I$,
 \begin{equation} \label{eq:dueM5}
 \mathfrak{Rm} = \widetilde{X}\bigcup\{ R \in
~\bigcup_{n \geq 1}\P(\D^n \times
(X\backslash\{\bot\}))~|~(u_1,..., u_{n}, a), ~(u_1,..., u_{n}, b) \in R ~~implies ~~b = a \}
\end{equation}
 so that each $(n+1)$-ary relation is a graph of a function, and by  $~\mathfrak{Rm}_k, ~k \geq 1$, we will denote the subset of all k-ary relations in $\mathfrak{Rm}$, and by $\emptyset$ each empty relation in $\mathfrak{Rm}$,
with the set of unary relations in $\widetilde{X}$,
\begin{equation} \label{eq:dueM6}
\widetilde{X} =
\{\{ a\}~|~ a \in X ~~and ~~a \neq \bot \} \bigcup \{\emptyset\} =  \{ \emptyset, \{ f\}, \{ \top\},\{ t\}\}
\end{equation}
where $\{ a\}$ denotes unary relation with unique tuple equal to the truth-value $a `in X$ and $\emptyset$ unary empty relation,
and hence, $\emptyset\in \widetilde{X}\subset \P(X)$.  So, $(\widetilde{X},\preceq)$ is complete lattice with total ordering $\emptyset \preceq \{f\}\preceq \{\top\}\preceq \{t\}$, that is, with the  bottom element and   $\{a\} \preceq \{b\}$ iff $a\leq b$ (w.r.t the \emph{truth-ordering} in $X$)\footnote{However, if we consider the inconsistent truth-value $\top$ as both true and false, that is, as set $\{f,t\}$ and unknown truth-value as empty set $\emptyset$, then $\widetilde{X} =
 \{ \emptyset, \{ f\}, \{ f,t\},\{ t\}\}$ is the powerset $\widetilde{X}= \P(\{ f,t\})) =\P(\textbf{2})$, we obtain the following isomorphism with \emph{knowledge-ordering} lattice $(X,\leq_k)$ of Bedlnap's bilattice and the complete powerset lattice $(\widetilde{X}, \subseteq)$,
 $$is:(X,\leq_k)\simeq (\widetilde{X}, \subseteq),$$
 such that $is(\bot) = \emptyset$, $is(f) = \{f\}$, $is(t) = \{t\}$ and $is(\top) = \{f,t\}$, from which we see the natural correspondence of $\widetilde{X}$ with knowledge ordering and hence with knowledge extensions.
 }.
\end{definition}
Note that $\mathfrak{Rm}$ contains only \emph{known} extensions of the many-valued concepts, and this explains also why we defined for $\bot \in X$ the empty set $\emptyset$.

 We define the following partial ordering $\preceq$ for
the m-extensions in
$\mathfrak{Rm}$ (note that for a given k-ary relation with $k\geq 2$ and $1\leq i\leq k$, the $\pi_i(R)$ is the i-th projection of $R$ while $\pi_{-1}(R)$ a relation obtained from $R$ by eliminating i-th column of it):
\begin{definition} \label{def:POrder} \textsc{Extensional partial order in} $\mathfrak{Rm}$:\\
We extend the ordering of unary relations in $\widetilde{X }\subset\mathfrak{Rm}$ by:
for any two nonempty m-relations $R_1, R_2 \in \mathfrak{Rm}$ with arity $k_1 = ar(R_1), k_2 = ar(R_2)$ and $m = \max(k_1,k_2)\geq 2$, $\emptyset \preceq R_i$ for $i =1,2$, and

$~~~R_1\preceq R_2~~$ iff $~~$ \\for each $(u_1,...,u_{m-1},a) \in
Ex(R_1,m)$,  $\exists(u_1,...,u_{m-1},b) \in Ex(R_2,m)$
with $a \leq b$,\\
where this expansion-mapping $Ex:\mathfrak{Rm}\rightarrow
\mathfrak{Rm}$ is defined as follows for any nonempty $R \in \mathfrak{Rm}$:
\begin{equation} \label{eq:dueM7}
Ex(R,m) =
 \left\{
    \begin{array}{ll}
      \D^{m-k}\times
\{a\}, ~~for ~~R = \{a\}, ~~a \in X\backslash\{\bot\} & \hbox{if $~~m > k=1$}\\
  \bigcup \{\{(u_1,...,u_k)\}\times \D^{m-k}\times
\{a\}~|~(u_1,...,u_k,a) \in R\}, & \hbox{if $~~m > k> 1$}\\
      R, & \hbox{otherwise}
    \end{array}
  \right.
\end{equation}
We denote by $R_1 \simeq R_2$ iff $R_1\preceq R_2$ and $R_2\preceq
R_1$, the equivalence relation between m-extensions.
\end{definition}
We chose $\emptyset, \{t\} \in  \widetilde{X} \subset \mathfrak{Rm}$ to be the representative elements of the bottom and top equivalence classes in $\mathfrak{Rm}$ relatively (determined by $\simeq$), so that $(\forall R \in \mathfrak{Rm})( \emptyset \preceq R \preceq \{t\})$. Consequently, the \emph{extensional} ordering $\preceq$ in $\mathfrak{Rm}$ extends the ordering of the lattice  $(\widetilde{X},\preceq)$.

We introduce for the concepts in $D_I$ an \emph{extensional interpretation}  $h$ which assigns the \emph{m-extension} to each intensional concept in $\D$,
and can be considered as an \emph{interpretation of concepts} in
$\D$. Thus, each concept in $D_I$ represents a set of tuples in $\D$,
and can be also an element of the extension of another concept and
of itself also.  Each extensional interpretation $h$ assigns to the intensional elements of $\D$
an appropriate extension: in the case of particulars $u \in  D_{0}$, $h_0(u) \in D_0$, such that for each logic value $a\in X\subset D_0$, $h_0(a) = a$.  Thus, we have the particular's mapping $h_0:D_0\rightarrow D_0$ and more generally  (here $'+'$ is considered as disjoint union),
 \begin{equation}\label{eq:dueM6}
 h =   \sum_{i\in \mathbb{N}}h_i:\D \longrightarrow   D_0+\sum_{i\geq 1}\mathfrak{Rm}_i
\end{equation}
 where $h_1:D_1 \rightarrow \widetilde{X}$ assigns to each  L-concept $u \in D_1$  (of a sentence with truth-value $a\in X$), a
relation composed by the single tuple $h_1(u) = \{a\}$ if $a \neq \bot$),  $~\emptyset$ otherwise, and $h_i:D_i\rightarrow \mathfrak{Rm}_i$, for $i\geq 2$, that assigns a m-extension to non-sentence concepts. The extension of the \emph{self-reference}
truth-concept $u_T\in D_2$, for any extensionalization  function $h$ is given by $h(u_T) = \{(u,a)~|~u \in D_1,
\emptyset \neq h(u) =\{a\}\}$. This concept represents the truth of
all logic-concepts in $D_1$, in the way that for any L-concept $u \in D_1$ we have that $~h(u) =\{a\}~$ iff $~(u,a) \in h(u_T)$.

In what follows, this logic $IFOL_B$  will be shortly denoted by $\L_{in}$ and the set of all its formulae by $\L$.
 \begin{definition} \label{def:syntax} \textsc{Syntax of many-valued intensional FOL $\L_{in}$}:\\
 We define the syntax of 4-valued many-sorted first-order logic $\L_{in}$,  with complete Belnap's lattice of truth-values $(X,\leq)$, by:\\
 - Variables $x,y,z,..$ in $\V$;\\
 - Language constants $c,d,...$ are considered as nullary functional letters (all non nullary functional letters are represented as the predicate
  letters for graphs of these functions);  \\
  - Predicate letters in $P$, denoted by $p_1^{k_1},p_2^{k_2},...$
   with a given arity $k_i\geq 1$,    $i = 1,2,..$  Nullary predicate letters are considered as propositional letters, as for example the built-in predicate letters $\{p_a~|~ a \in X\}$;\\
  - The 4-valued logic connectives:
   1. unary connectives are the negation $\neg$, and existential an universal quantifiers $\exists_i, \forall_i$.\\
   2. binary connectives in $\circledcirc \in \{\wedge,\vee,\Rightarrow,\Leftrightarrow\}$ are   respectively logic conjunction, disjunction,  implication and  equivalence.
   From the fact that we are using the virtual predicates that represent any formula composed by these connectives as a specific virtual predicate with free variables obtained as the union of the free variables of all its standard predicates in $P$ that compose this formula, for the syntax algebra $\mathcal{A}\mathfrak{B}_{FOL}$ 
   of $\L_{in}$ we need to use for each binary logic operator in $\circledcirc$ (different from $\Leftrightarrow$) an algebraic operator $\circledcirc_S$ where the pairs of indexes in the set $S$ indicate the equal free variables used in two subformulae $\psi_1$ and $\psi_2$ that compose a resulting virtual predicate $\phi(x)$ equal to formula $\psi_1\circledcirc_S \psi_2$ as it is explained\footnote{
  For example, the FOL formula
$\psi_1(x_i,x_j,x_k,x_l,x_m) \wedge \psi_2 (x_l,y_i,x_j,y_j)$ will be
replaced by a resulting \emph{virtual predicate} $\phi(x_i,x_j,x_k,x_l,x_m,y_i,y_j)$ defined by algebraic expression $\psi_1(x_i,x_j,x_k,x_l,x_m)
\wedge_S \psi_2~ (x_l,y_i,x_j,y_j)$, with $S = \{(4,1),(2,3)\}$, and then traduced by
the algebraic expression $~R_1 \bowtie_{S}R_2$  (see point 2.5 of Definition 8 in \cite{Majk25}) where $R_1 \in
\P(\D^5), R_2\in \P(\D^4)$ are the extensions for a given many-valued
interpretation $v^*$ of the virtual predicate $\psi_1, \psi_2$
relatively. In this example the resulting relation
will have the following ordering of attributes of obtained virtual predicate $\phi$:
$(x_i,x_j,x_k,x_l,x_m,y_i,y_j)$.\\
In the case when $S$ is empty (i.e. its cardinality $|S| =0$) then the resulting relation is the Cartesian product of $R_1$ and $R_2$.}
   in \cite{Majk25}.\\
  - Abstraction operator $\lessdot \_~\gtrdot$, and punctuation
 symbols (comma, parenthesis).\\
 With the following simultaneous inductive definition of \emph{terms} and \emph{formulae}:
 \begin{enumerate}
   \item All variables and constants (0-ary functional letters in P) are terms.
   We denote by $p_a$ a logic constant (built-in 0-ary predicate symbol) for each truth value $a \in X$, so that a set of constants is a not empty set.
   \item If $~t_1,...,t_k$ are terms, then $p_i^k(t_1,...,t_k)$ is a formula
 ($p_i^k \in P$ is a k-ary predicate letter).
   \item  In what follows any open-formula $\phi(\textbf{x})$ with non empty
tuple of free variables $\textbf{x} = (x_1,...,x_m)$, will be called a m-ary
  \emph{virtual predicate}, denoted also by
$\phi(x_1,...,x_m)$ and provided in Definition  \ref{def:virt-predicate}.\\
   If $\phi$ and $\psi$ are formulae, then   $\neg\phi$, $(\forall_k x) \phi$,  $(\exists_k x) \phi$ and, for each connective $\odot$, if binary $\phi \odot \psi$ or if unary $\odot\phi$, are the formulae. In a formula $(\exists_k )\phi(\textbf{x})$ (or $(\forall_k) \phi(\textbf{x})$), the virtual predicate $\phi(\textbf{x})$ is called "action field" for the quantifier $(\exists_k)$ (or $(\forall_k)$) of the k-th free variable in the tuple $\textbf{x}$. A variable $y$ in a formula $\phi$ is called bounded variable iff it is the variable quantified by $(\exists_k)$ (or $(\forall_k)$). A variable $x$ is
free in $\psi(\textbf{x})$ if it is not bounded. A \emph{sentence} is a closed-formula having no free variables.
\item If $\phi(\textbf{x})$   is a formula and $\alpha \subseteq \overline{\textbf{x}}$ is a possibly empty subset of  \emph{hidden} (compressed) variables, then $\lessdot \phi(\textbf{x}) \gtrdot_{\alpha}^{\beta}$ is an abstracted term,
    where $\beta$ is  remained subset of free visible variables in  $\phi$.  So, the subtuples of hidden and visible variables (preserving the ordering of the tuple $\textbf{x}$ are $\pi_{-\beta}\textbf{x}$ and $\pi_{-\alpha}\textbf{x}$, respectively).  If $\alpha$ or $\beta$ is empty sequence, than it can be
 omitted (for example, if $\phi$ is closed formula, then this term
 is denoted by $\lessdot \phi \gtrdot$).
  \end{enumerate}
An occurrence of a variable $x_i$ in  a term $\lessdot \phi(\textbf{x})
\gtrdot_{\alpha}^{\beta}$ is \emph{bound}  if $x_i \in \alpha$, free
if $x_i \in \beta$, so that the variables in $\alpha$ are not subjects of assignment $g\in \D^\V$ and can not be quantified by existential and universal FOL quantifiers.\\
 In particular we introduce this subset of distinguished predicates:
\begin{description}
 \item[a] The binary predicate letter $p_1^2 \in P$ is singled out as a distinguished logical predicate and formulae of the form $p_1^2(t_1,t_2)$ are to be rewritten in the form $t_1 = t_2$. It is a  built-in 2-valued\footnote{For "2-valued" we intend  the subset of standard truth-values $\textbf{2} =\{f,t\} \subset X$.} predicate for the identity.
  \item[b] The binary predicate letter $p_2^2 \in P$ is singled out as a
distinguished 2-valued  logical predicate and formulae of the form
$p_2^2(t_1,t_2)$ are to be rewritten in the form $t_1 =_{in} t_2$.
It is a built-in predicate for the weak-intensional-equivalence.
\item[c] The unary many/valued predicate $T$ and $Know$ for robot's own knowledge\footnote{In this paper we propose the minimal version of the predicate $Know \in P$ with first argument dedicated to time-specification of obtained knowledge, second dedicated for the subject of this knowledge and the last argument is abstracted term of the knowledge sentence: the truth-value of a ground atom of this predicate represents the truth-value of the known sentence contained in third argument.} management. The basic argument of them is an abstracted term obtained from a given sentence, so that the truth value of their ground atoms  reflects the truth/value of reified sentence in them. Consequently their interpretation has to satisfy the particular knowledge-constraints provided in (\ref{eq:esem3s}).
\end{description}
We denote by $\L$ the set of all formulae of $\L_{in}$ and by  $\L_0$ the subset of its sentences.
\end{definition}
Notice that "built-in" is used for predicate letters that have fixed
invariant interpretation, and consequently fixed invariant
extension: each possible interpretation of $\L_{in}$ has to satisfy
this constraint for built-in predicates, so that different many-valued
interpretations can be used only for the remaining set of predicate
letters in $P$. The $T$ is the many-valued version of the truth-predicate in the standard 2-valued logic where a formula $T(\lessdot\phi \gtrdot)$ is true iff the (closed) sentence $\phi$ is true.

 The intensional interpretation $I:\L \rightarrow D_I$ maps each nullary predicate symbol  into $D_1$, and for each logic constant $p_a, a \in X$, we have that $I(p_a) = u_a \in D_1$. The language constants (nullary functional symbols in $P$) are mapped in
concepts in $\D$. Notice that for any non-meaningful language constant (as "Unicorn" for example) we have that $I(c) = \circledR \in D_0$.

 Consequently, each atom $p_i^k(t_1,...,t_k)$, $p_i^k \in P$, is mapped into
the concept $u \in D_{m+1}$, where $m$ is a number of the free
variables of the virtual predicate obtained from this atom.
Consequently, each ground atom $p_i^k(t_1,...,t_k)$ is mapped into
$D_1$, and if there is any $t_i$, $1 \leq i \leq k$, such that
$I(t_i) = \circledR$, then $I(p_i^k(t_1,...,t_k)) = u_{\bot}$, where
$\bot \in X$ such that $\neg \bot = \bot$ is the
logic truth-value 'unknown'. In this way we guarantee that unmeaningful sentences as
$blu(Unicorn)$ will always have the 'unknown' logic truth-value. This
intensional interpretation can be given also to all contradictory
formulae (with truth-value $\top\in X$) that can not be nor true nor false, as the Liars paradoxes.

The main difference with standard FOL syntax is that here we can use
abstracted terms obtained from logic formulae, for example, "x believes that $\phi$" is given by formula $p_i^2(x,\lessdot \phi \gtrdot)$ ( where $p_i^2$ is binary "believe" predicate).

Let $\phi(\textbf{x})$ be any well-formed virtual predicate in $\L_{in}$,
then $\lessdot \phi(\textbf{x})\gtrdot_{\alpha}^{\beta}/g$  with the set of hidden variables $\alpha = (x_1,...,x_m)$, $m \geq 1$, is the  ground term (the assignment $g$ is applied only to free visible variables in $\beta = \overline{\textbf{x}}-\alpha$ whose semantics correlate is an intensional entity (concept) of degree $m$. If $m =0$, the intensional correlate of this singular ground term
is the proposition (sentence) "that $\phi$"; if $m =1$, the intensional
correlate is the property of "being something $x_1$ such that $\phi$";
if $m > 1$, then the intensional correlate is the concept
"the relation among $x_1,...,x_m$ such that $\phi$".

Certain complex nominative expressions (namely, gerundive and
infinitive phrases) are best represented as singular terms of the
sort provided by our generalized bracket notation $\lessdot
\phi(\textbf{x})\gtrdot_{x_1,...,x_m}^{\beta}$, where $m \geq 1$. This MV-intensional logic
$~\L_{in}$ differs from two-valued intensional FOL  in heaving these singular terms
$\lessdot \phi(\textbf{x})\gtrdot_{x_1,...,x_m}^{\beta}$ where the ground atoms $\phi(\textbf{x})/g$ are \emph{many-valued}.
\begin{definition} \label{def:abstrConv}
 An assignment $g:\V \rightarrow \D$ for variables in $\V$  in  $IFOL_B$ is
applied only to free variables in terms and formulae.
 Such an assignment $g \in \D^{\V}$ can be recursively uniquely extended into the assignment $g^*:\T \rightarrow \D$, where $\T$ denotes the set of all terms, by:
\begin{enumerate}
  \item $g^*(t_i) = g(x) \in \D$ if the term $t_i$ is a variable $x \in \V$.
  \item If the term $t_i$ is a constant $c \in F$ then $g^*(t_i)  \in \D$ us its tarskian interpretation.
  \item If a term $t_i$ is $f_i^k(t_1,...,t_k)$, where $f_i^k \in F$ is a
k-ary functional symbol and $t_1,...,t_k$ are terms, then
$g^*(f_i^k(t_1,...,t_k))$ is the value $u\in \D$ of this functions for the tuple of values in  $(g^*(t_1),...,g^*(t_k))$ or,
equivalently, in the graph-interpretation of the function,
  $(g^*(t_1),...,g^*(t_k),u) \in \D^{k+1}$.
   \item If $t$ is an abstracted term obtained for an open formula $\phi_i$, $\lessdot \phi_i(\textbf{x}_i) \gtrdot_{\alpha_i}^{\beta_i}$ where $\alpha_i \bigcup \beta_i$ are the variables in tuple $\textbf{x}_i$,then we must restrict the assignment to $g\in \D^{\beta_i}$ and to obtain recursive definition (when also $\phi_i(\textbf{x}_i)$ contains abstracted terms:
\begin{equation} \label{eq:assAbTerm}
  g^*(\lessdot \phi_i(\textbf{x}_i)\gtrdot_{\alpha_i}^{\beta_i}) =_{def}
    \left\{
    \begin{array}{ll}
   I(\phi_i(\textbf{x}_i))~~ \in D_{|\alpha_i|+1}, & \hbox{if  $\beta_i$ is  empty}\\
       I(\phi_i(\textbf{x}_i)[\beta_i
/g(\beta_i)])~~ \in D_{|\alpha_i|+1}, & \hbox{otherwise}
       \end{array}
  \right.
 \end{equation}
where $g(\beta) = g(\{y_1,..,y_m\}) = \{g(y_1),...,g(y_m)\}$ and $[\beta
/g(\beta)]$ is a uniform replacement of each i-th variable in the
set $\beta$ with the i-th constant in the set $g(\beta)$. Notice that $\alpha$ is the set of all free variables in the formula $\phi[\beta /g(\beta)]$.
\item  If $~t = \lessdot \phi_i\gtrdot$ is an abstracted term obtained from a sentence $\phi_i$ then \\$g^*(\lessdot \phi_i\gtrdot) = I(\phi_i) \in D_0$.
\end{enumerate}
\end{definition}
In what follows we will use the graph-interpretation for functions
 in FOL like its interpretation in intensional logics. We denote by $\L$ the set of all formulae $\phi$ of the logic $IFOL_B$, and denote by $~t_i/g~$ (or $\phi/g$) the ground term (or formula) without free variables, obtained by assignment $g$ from a term $t_i$ (or a formula $\phi$), and by  $\phi[x/t_i]$ the formula
obtained by  uniformly replacing $x$ by a term $t_i$ in open formula $\phi(x)$.\\
A \emph{sentence} is a (closed) formula having no free variables.
\begin{definition} \label{def:HerbrandValuat}
 The Herbrand base of a  logic $IFOL_B$ is defined by

 $~~H = \{p_i^k(t_1,..,t_k)~|~p_i^k \in P$ and $t_1,...,t_k$ are ground terms $\}$.
 \\
 Herbrand interpretations  are the mappings $v:H \rightarrow X$,which must satisfy the constraints for the  built-in ground atoms (any ground atom of a built-in predicates must have the same truth-value for every Herbrand interpretation) and for built-in propositional symbols $p_a$, for $a \in X$, $v(p_a) =a$.
  \end{definition}
  We are able to define the many-valued algebraic semantics of $\L_{in}$,
based on the standard extension of Herbrand interpretations to all
sentences $\L_0$ of $\L_{in}$.
\begin{definition}  \label{def:MV-algebra} \textsc{Belnap's 4-valued Semantics of $\L_{in}$}:\\
The algebraic semantics of the 4-valued many-sorted intensional first-order logic $\L_{in}$
(provided by Definition 11 in \cite{Majk25})  can be obtained by the unique extension of a given 4-valued Herbrand interpretation $v:H\rightarrow X$ (from Definition \ref{def:HerbrandValuat}), into the valuation  $v^*:\L_0 \rightarrow X$, where $\L_0 \subset \L$ is the strict subset of all sentences
(formulae without free variables), inductively as follows: \\for any formula $\phi, \psi \in \L$ and a given assignment $g:\V \rightarrow \D$,  we have that (here $\bigwedge$ and $\bigvee$ are the meet and joint operators of the lattice $X$, respectively)
\begin{enumerate}
  \item $~~v^*(\neg \phi/g) = ~\neg v^*(\phi/g)$, and for built-in propositions $v^*(p_a) = v(p_a)$ for $a\in X$,
  \item $~~v^*(\phi/g \odot \psi/g) =  v^*(\phi/g) \odot
v^*(\psi/g)$, \\for each logic connective $\odot \in \{\wedge, \vee,
\Rightarrow,\Leftrightarrow\}$, i.e., $\A_{B}$-operators in Definition \ref{def:mv-algebra} over Belnap's truth-lattice $X$,
  \item $~~v^*(((\exists_i) \phi)/g) =  v^*(\phi/g)$ if i-th variable $x$ is not a free  in $\phi$;\\
 $ = \bigvee \{v^*(\phi/g_1)~|~g_1 \in \D^{V}$ such that for all $y \in \V \backslash\{x\}, g_1(y) = g(y)\}$ otherwise,
  \item $~~v^*(((\forall_i) \phi)/g) =  v^*(\phi/g)$ if i-th variable $x$ is not a free  in $\phi$;\\
 $ = \bigwedge \{v^*(\phi/g_1)~|~g_1 \in \D^{V}$ such that for all $y \in \V \backslash\{x\}, g_1(y) =  g(y)\}$ otherwise.
\end{enumerate}
 We denote by $\I_{MV} \subseteq X^{\L_0}$ the set of all many-valued valuations
 of $\L_{in}$ that have fixed (invariant) interpretation for each built-in predicate.\footnote{As, for example, the identity predicate $p_1^2 \in P$
 or nullary predicate letters $\{p_a~|~ a \in X\}$ for which it must be satisfied that for any
 $v^* \in \I_{MV}, v^*(p_a) = a$,  $v^*(p_1^2 (u_1,u_2)) = t$ iff $(u_1,u_2,t) \in  R_{=}$, and $v^*(p_2^2 (u_1,u_2)) = t$ iff $(u_1,u_2,t) \in
 R_{=_{in}}$,  and for any assignment $g$,
 $v^*(p_1^1 (t_i)) = v^*(T(t_i)) =_{def}  v^*(\phi/g)$ if term $t_i = \lessdot \phi/g\gtrdot$; $f$ otherwise.\\
 The difference between  $\I_{MV}$ and the total set of Herbrand interpretations $X^H$ is caused by the presence of the built-in predicates.}
\end{definition}
We recall that the set-based (for infinite sets as well) of the operators $\bigwedge$ and $\bigvee$ is well defined because our many-valued logics are based on the complete (and distributive) lattices $(X, \leq)$, which satisfy these requirements.

Notice that this Belnap's 4-valued extended interpretation $v^*:\L_0 \rightarrow X$ defined above \emph{is not a homomorphism}, because of the point 3 (and 4 as well), where the truth of the closed formula $(\forall x_i) \phi(x_i)$ can not be obtained from the logic value of $\phi(x_i)$ from the fact that  to any formula with free variables we can not associate any logic value. Because of that, we define a new version of many-valued interpretation, denominated "MV-interpretation" (which is also a homomorphism \cite{Majk25} between syntax algebra of $IFOL_B$ logic and algebra of relations in $\mathfrak{Rm}$):
\begin{definition} \label{def:MV-interpret} \textsc{MV-interpretations}:
We define, for a valuation $v^*:\L_0\rightarrow X$ of the sentences in $\L_0 \subset \L$ of the 4-valued many-sorted intensional first-order logic  $\L_{in}$,  the MV-interpretation $I^*_{B}:\L_0 \rightarrow \mathfrak{Rm}$, such that for any sentence $\phi/g\in \L_0$,
 \begin{equation}\label{eq:mvR}
 I^*_{B}(\phi/g) =
 \left\{
    \begin{array}{ll}
 \{v^*(\phi/g)\} \in \widetilde{X} \subset \mathfrak{Rm}, & \hbox{if $~~v^*(\phi/g) \neq \bot$}\\
      \emptyset, & \hbox{otherwise}
    \end{array}
  \right.
 \end{equation}
 and we define also the unique extension of $I^*_{B}$ to all open formulae $\L$ in
$\L_{in}$ as well, such that for any open formula (virtual predicate) $\phi(x_1,...,x_k) \in \L$,
\begin{equation} \label{eq:openEQ}
I^*_{B}(\phi(x_1,...,x_k)) = \{(g(x_1),...,g(x_k),a)~|~g \in
\D^{\V}~~ and~~ a = v^*(\phi/g)  \neq \bot\}\in \mathfrak{Rm}
\end{equation}8
We denote by $\W$ the set of all MV-interpretations derived from the set of 4-valued interpretations $v^*\in \I_{MV}$ specified in Definition \ref{def:MV-algebra}, with bijection $is_{MV}:\I_{MV} \simeq \W$ such that for any $v^* \in \I_{MV}$ we have that $I^*_{B} = is_{MV}(v^*):\L\rightarrow \mathfrak{Rm}$.
\end{definition}
For each \emph{fixed intensional} interpretation $I:\L \rightarrow \D$ and valuation  $v^*:\L_0 \rightarrow X$ in Definition  \ref{def:MV-algebra} that respect all built-in predicates, the corresponding extensionalization function $h:\D \rightarrow \mathfrak{Rm}$,  for any virtual predicate $\phi(x_1,...,x_n) \in \L$, satisfies the following 4-valued generalization of Tarski's FOL constraint:
\begin{multline} \label{eq:MVtarski}
 h(I(\phi(x_1,...,x_n))) =_{def} \\\{(g(x_1),...,g(x_n),a)~|~g \in
 \D^{\V},  h(I(\phi(x_1,...,x_n)/g)) = \{a\}, a\neq \bot \}
\end{multline}
  with $h(I(\phi(x_1,...,x_n)/g)) = I^*_{B}(\phi(x_1,...,x_n)/g)$,
so that  holds \cite{Majk25} the general composition $I^*_{B} = h\circ I$.
\section{Closed Knowledge Assumption (CKA) and Learning}
First of all, the CKA \emph{is not a well-known Closed World Assumption} (CWA) which postulates that all sentences of a logic that are not proven to be  true (the lack of knowledge) \emph{must be false}.  I argue that this conflation is often inappropriate, especially for: distributed knowledge bases,
semantic web applications, probabilistic and intensional reasoning,
systems with incomplete information. Instead, CKA can be summarized as:

"\emph{Only the explicitly represented knowledge is considered known; what is not represented is simply not known, rather than automatically false.}"

 The emphasis shifts from truth to knowledge. My work on intensional logic treats propositions as objects of reasoning. In that setting, it is natural to distinguish: whether a proposition is true, whether it is believed,
whether it is known, whether its probability is defined.
This distinction is much closer to epistemic logic than to classical logic programming.

The main strengths of CKA are: It separates knowledge from truth.
It avoids some unintuitive consequences of CWA in incomplete domains.
It aligns well with intensional and epistemic logics.
It fits naturally with probabilistic and temporal reasoning (which will be introduced in $IFOL_B$ in next dedicated paper), where incomplete information is common.
From a theoretical perspective, CKA is a coherent attempt to provide a cleaner epistemic foundation for reasoning under incomplete information. It is especially consistent with my broader program of treating propositions, probabilities, and temporal facts as intensional objects rather than reducing them to simple truth values. While it has had limited impact on mainstream AI systems, it addresses a genuine conceptual issue that arises when combining logic, probability, and knowledge representation.

 CWA principle has been valid only for the classical 2-valued logic, where the sentences can be true or false only.  In CKA we are working with the 4-valued logic with the logic value "unknown" (denoted by the symbol $\bot$) as well, so that only for such logics we can define CKA, that is, that all sentences for which we do not know their logic values, that they are unknown sentences (that is assigning to them the logic value $\bot$).

Knowledge representation formalisms are aimed to represent general conceptual information and are typically used in the construction of the knowledge base of reasoning agent (a robot in our case). A knowledge base can be thought of as representing the beliefs of such an agent.

Differently from 2-valued logic where extension of a given predicate is composed by only \emph{true} atoms, here for MV-logic $IFOL_B$ the ground atoms of extension of this predicate can have three truth-values (\emph{except the unknown value} $\bot$ for unknown facts which are not explicitly represented). In this way, the complete knowledge of AGI robots would correspond to the extensions of the only known n-ary \emph{concept}, $n \geq 1$, and hence it would be very easy to compare the knowledge of different robots developed by their individual experiences supported by their deductive inference capabilities.\\
\textbf{Remark:} The logic knowledge (composed by the logic sentences and corresponding intensional L-concepts heaving as extensions only the logic truth-values) will be provided in next Section both with robot's autoepistemic deduction.
\\$\square$\\
Consequently, we are able to represent the whole \emph{AGI robot's knowledge Database} of its n-ary concepts, $n \geq 1$, in each fixed instance of time (it is an analog to traditional relational Database with standard 2-valued FOL) as follows:
\begin{definition}\label{Def:knowledge} \textsc{AGI Robot's Current Atomic Knowledge Database}\footnote{This definition is different from the definition in \cite{Majk25}, in order to be able also to derive the current Herbrand model of robot's knowledge $v:H \rightarrow X$, where $H$ is the Herbrand base for all  predicates in $P$, Note that the autoepistemic predicate $Know$ is a meta-èredicate, so that $Know \notin P$.}:\\
For a given instance of time, the current AGI Robot knowledge is defined by its current MV-interpretation $\textbf{I}^*_{B}$ (that represents robot's current world $w \in \W$) defines  the extension of robot's knowledge Database $\K$ by:
\begin{equation} \label{eq:two-know}
~~ \K ~= ~ \{R_{p_i^k} = \textbf{I}^*_{B}(p_i^k(x_1,...,x_k) ~|~p_i^k \in P, ~for~ x_i \in \V,  1\leq i\leq k\}
\end{equation}
where $\textbf{I}^*_{B}$ is the current  MV-interpretation (a function from $\L$ to $\mathfrak{Rm}$) and $\V$ the set of variables and $R_{p_i^k}$ is the $(k+1)$-ary relation obtained for the k-ary predicate $p_i^k$.
\end{definition}
The Belnap's bilattice has both truth and  the knowledge orderings.  The truth ordering we used for the set of relations in $\mathfrak{Rm}$  in Definition \ref{def:POrder}, while the knowledge ordering is useful just for the knowledge base and the principle of CKA: Each k-ary tuple, $k\geq 2$, not in knowledge Database $\K$ is considered as unknown fact.
Thus, the inclusion relation for robot's knowledge is just the simple set-inclusion $\subseteq$.\\
 Consequently, the atomic knowledge database is the subset of current metaknowledge of a robot is the set of relations, that is,
\begin{equation} \label{eq:two-metaknow}
\K \subset~~  ~ \{R ~|~R  \in Im(\textbf{I}^*_{B}), ~for~ ar(R) \geq 2\}
\end{equation}
where for each tuple of ground terms, $\textbf{d} = (t_1,...,t_k,a) \in R_{p_i^k} \in \K$, of relation with arity $k+1\geq 2$, the current truth-value of this, for robot known, fact is equal to last value of this tuple, $a =\pi_{k+1}(\textbf{d}) \in \{f, \top, t\}$ with truth-ordering $f < \top < t$. So, for any ground instance of each robot's (real or virtual) predicate (and corresponding intensional concept), robot is able to know the level of truth in a given instance of time. Thus, given current atomic knowledge database $\K$, we are able to derive from it the current Herbrand model $v:H \rightarrow X$ and its extension $v^*$ to all sentences, for robot's knowledge by, for any ground atom $p_i^k(t_1,...,t_k)\in H$, with the ground terms $t_i$, for $1\leq i\leq k$,
\begin{equation} \label{eq:two-metaknowH}
 v(p_i^k(t_1,...,t_k)) =
 \left\{
    \begin{array}{ll}
 a\in \{f, \top, t\}, & \hbox{if $~~(t_1,...,t_k,a) \in R_{p_i^k} \in \K$}\\
      \bot, & \hbox{otherwise}
    \end{array}
  \right.
\end{equation}
Thus, from current atomic knowledge database $\K$, we are able to derive current Herbrand model $v$, its extension to all sentences $v^*$ and thence the current possible world (MV-model in Definition \ref{Def:knowledge}) $\textbf{I}^*_B = h\circ I \in \W$.

By default the initial robot's atomic knowledge database $\K$ before the initial process of learning contains only the extensions of the built-in predicates (invariant w.r.t any Herbrand interpretation $v$. However, we intend to set this initial knowledge by some necessary quantity of knowledge which will provide enough information from which a robot will be able to obtained its own knowledge by intensive initial learning and own experience.

In the accelerating world of AI, discussions often gravitate toward its potential to surpass human intelligence and the hypothetical scenario of machines overtaking decision-making processes, possibly leading to catastrophic consequences. This fear is not entirely misplaced. Today, AI technologies such as self-driving cars and smart homes already make crucial data-based decisions without direct human intervention. However, how AI systems interpret and process the data they have often mitigates the risks of these technologies. One key element in this process is Closed World Assumption in AI.

In \emph{2-valued logic}, the Closed World Assumption (CWA) asserts that if a fact isn't recorded as true, then it is considered false. So, the knowledge base or system has complete information about the domain, and anything not explicitly stated is assumed to be non-existent or incorrect. In particular, we drew attention to the problem of representing negative information — information about the non-existence of objects or events or processes or influences. Negative information, we argued, must sometimes be encoded by the mere absence of positive information to the contrary: propositions which are not given, or which are not deducible from those which are given, should be assumed to be false. The closed world assumption is used in many practical applications of logic (planning, prolog\footnote{In Prolog, \textbf{Yes} means a statement is provably true. Consequently,
No means a statement is not provably true. This only means that
such a statement is false, if we assume that all relevant information
is present in the respective Prolog program.
For the semantics of Prolog programs we usually do make this
assumption. It is called the Closed World Assumption: we assume
that nothing outside the world described by a particular Prolog
program exists (is true). }, databases\footnote{The Closed-World Assumption (CWA) on a database expresses that an atom not in the database is false. The CWA is only applicable in domains where the database has complete knowledge. In many cases, for example in the context of distributed databases, a data source has only complete knowledge about part of the domain of discourse.}) in order to lower the complexity class of the reasoning algorithms used. This assumes that any predicates that are stated are true and the ones that aren’t are false.

Based on these facts we are able, in analogy to the Closed World Assumption (CWA) used for Databases defined by the standard 2-valued FOL (as, for example the RDBs), we can introduced the Closed Knowledge Assumption (CKA) and to show that it is satisfied for the  4-valued many-sorted Intensional FOL.
\begin{definition}\label{Def:CKA} \textsc{Closed Knowledge Assumption(CKA)}:\\
The CKA for the  many-sorted Intensional FOL based on Belnap's 4-valued bilattice of truth-values is defined, for every Herbrand interpretation $v:H \rightarrow X$ and assignment $g:\V \rightarrow \D$, as follows:\\
For each k-ary virtual predicate $\phi(x_1,...,x_k)$, $k \geq 1$,

$v^*(\phi(x_1,...,x_k)/g) \neq \bot~~$ iff $~~(g(x_1),...,g(x_k))\in \pi_{-i}(I_B^*(\phi(x_1,...,x_k)))$.
\end{definition}
So, we obtain the following property for the robot's knowledge database $\K$ based on the logic $IFOL_B$:
\begin{coro} The robot's atomic knowledge database $\K$ based on the logic Belnap's 4-valued many-sorted intensional first-order logic $IFOL_B$, representing the Herbrand many-valued interpretation $v:H \rightarrow X$ at any given instance of time, satisfies the Closed Knowledge Assumption.
\end{coro}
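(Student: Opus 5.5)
The plan is to unfold Definition \ref{Def:CKA} directly from the definitions of MV-interpretation (Definition \ref{def:MV-interpret}) and of the knowledge database $\K$ (Definition \ref{Def:knowledge}). The corollary then follows essentially by construction. Fix the current Herbrand model $v:H\rightarrow X$ derived from $\K$ by (\ref{eq:two-metaknowH}), its unique extension $v^*$ (Definition \ref{def:MV-algebra}), and the corresponding MV-interpretation $\textbf{I}^*_B=is_{MV}(v^*)$. I read $\pi_{-i}$ in Definition \ref{Def:CKA} as the projection that removes the last (truth-value) column, i.e.\ $i=k+1$ for a $k$-ary virtual predicate.

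First I treat the atomic case, which is the one that concerns $\K$ directly. Take $\phi=p_i^k(x_1,\dots,x_k)$ and an assignment $g$. By (\ref{eq:two-metaknowH}), $v(p_i^k(g(x_1),\dots,g(x_k)))\neq\bot$ iff $(g(x_1),\dots,g(x_k),a)\in R_{p_i^k}$ for some $a\in\{f,\top,t\}$. Since $R_{p_i^k}=\textbf{I}^*_B(p_i^k(x_1,\dots,x_k))\in\K$ is the graph of a function (Definition \ref{def:m-sortedExtensions}), this holds iff $(g(x_1),\dots,g(x_k))\in\pi_{-(k+1)}(R_{p_i^k})$. This is exactly the CKA condition. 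The general virtual predicates with arguments partly constant are handled in the same way through the virtual-predicate convention.

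Next I extend the argument to an arbitrary virtual predicate $\phi(x_1,\dots,x_k)$. By (\ref{eq:openEQ}),
\[
\textbf{I}^*_B(\phi(x_1,\dots,x_k))=\{(g'(x_1),\dots,g'(x_k),a)~|~g'\in\D^{\V},~a=v^*(\phi/g')\neq\bot\}.
\]
For the direction ($\Rightarrow$), if $v^*(\phi/g)\neq\bot$, taking $g'=g$ puts $(g(x_1),\dots,g(x_k),v^*(\phi/g))$ in the relation, so its projection contains $(g(x_1),\dots,g(x_k))$. For the direction ($\Leftarrow$), suppose the tuple lies in the projection. Then there is some $g'$ agreeing with $g$ on $x_1,\dots,x_k$ with $v^*(\phi/g')\neq\bot$. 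Because $\phi/g'$ and $\phi/g$ are the same ground sentence (only free variables are substituted), $v^*(\phi/g)=v^*(\phi/g')\neq\bot$.

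The main obstacle is making precise this last step: that $v^*(\phi/g)$ depends only on $g$ restricted to the free variables of $\phi$. I would establish it by a short induction on the clauses of Definition \ref{def:MV-algebra}. The quantifier clauses only range over assignments that vary the bound variable, and abstracted terms are handled by (\ref{eq:assAbTerm}), where only the visible variables $\beta$ are assigned. A secondary subtlety is the functionality of the relation, which guarantees that the projection does not lose or conflate truth-values. That property is already built into $\mathfrak{Rm}$.
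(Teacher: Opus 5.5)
Your proposal is correct, and it proves more than the paper's proof does, by a partly different route.

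\textbf{The paper's route.} The paper restricts attention to the atomic virtual predicates $p_i^k(x_1,\dots,x_k)$, on the grounds that these are exactly what $\K$ stores. It then checks only one direction: if $a=v(p_i^k(x_1,\dots,x_k)/g)\neq\bot$, then by (\ref{eq:two-metaknowH}) $(g(x_1),\dots,g(x_k),a)\in R_{p_i^k}$. Hence the $k$-tuple lies in $\pi_{-i}(I^*_B(p_i^k(x_1,\dots,x_k)))$.

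\textbf{Your route.} Your atomic case is the same unfolding of (\ref{eq:two-metaknowH}), but it also gives the converse, via the ``otherwise $\bot$'' branch that the paper leaves implicit. Your second step uses (\ref{eq:openEQ}) together with the fact that $\phi/g$ depends only on $g$ restricted to the free variables. This covers arbitrary virtual predicates, as Definition \ref{Def:CKA} literally requires. It also shows that CKA holds by construction for every MV-interpretation $is_{MV}(v^*)$, because (\ref{eq:openEQ}) discards exactly the $\bot$-tuples; $\K$ only fixes which $v$ is meant. The paper's argument instead ties the property to the reconstruction of $v$ from $\K$, which is why it stays at the atomic level.

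\textbf{Two small corrections.} First, the step from ``$(g(x_1),\dots,g(x_k),a)\in R_{p_i^k}$ for some $a$'' to membership in $\pi_{-(k+1)}(R_{p_i^k})$ is just the definition of projection. The functionality of $R_{p_i^k}$ is not what makes it hold; functionality is what makes (\ref{eq:two-metaknowH}) define a single-valued $v$. Second, your ``main obstacle'' is lighter than you suggest. The paper defines $\phi/g$ as the ground sentence obtained by substituting the free (visible) variables, so $\phi/g=\phi/g'$ whenever $g$ and $g'$ agree on $x_1,\dots,x_k$. No induction over the clauses of Definition \ref{def:MV-algebra} is needed.
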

\textbf{Proof}:
In this case the general virtual predicates are just the predicates $p_i^k \in P$ from which is defined the Herbrand base and, from $\K$ and  (\ref{eq:two-metaknowH}) its Herbrand interpretation $v:H \rightarrow X$.
So, for each sentence of  predicate $p_i^k\in P$,  for a given assignment $g\in \D^\V$,
\\if $a= v(p_i^k(x_1,...,x_k)/g) \neq \bot~~$ then $~~(g(x_1),...,g(x_k),a)\in R_{p_i^k} \in \K$, \\
that is, $(g(x_1),...,g(x_k)) \in \pi_{-i}R_{p_i^k} =\pi_{-i}(I_B^*(p_i^k(x_1,...,x_k)))$ \\
as required from CKA.
\\$\square$\\
With this CKA property of its knowledge database $\K$, a robot is able to verify which  of its sentences are still unknown, that is, to derive the logic truth value of every sentence of its logic language $\L$.

This result is analogous to the property of Closed World Assumption for RDB (Relational Databases) based on 2-valued standard FOL (First-order Logic).

By using in our knowledge 4-levels cognitive neurosymbolic structure also a neural LLM (Large Language Models), we are able to obtain the parsing from the natural language expressions into the FOL formulas, and to consider LLM for natural interactive dialog between a robot and humans (as in ChatGPT and similar applications).  The exploration of the capability of neural models in parsing English sentences to FOL are provide, for example, in \cite{SAKr20}. In next section we will provide the relationships between logic predicates and n-ary concepts which names are just the simple words or finite natural language expressions, so that we can use also the parsing of natural language expressions into into intensional expressions of these concepts as well.\\\\
\textbf{Conceptual learning bridge}: We consider that LLM with common statistical natural language knowledge and initial sets of predicates and corresponding concepts are part of initial robot's knowledge.  However, robot will be able to define new predicates (when with given set of predicates in $P$ and their corresponding intensional entities (concepts) is not possible to generate a logic formula for particular natural language expressions), or modify existing predicates in $P$  (by adding new attributes to such predicates).  So, with this processes we obtain the scalability of the symbolic AI based on IFOL predicates and their domains.

The component of the neural AI learning and its scalability is given by the properties of implementation of LLM.\\\\
\textbf{Learning from neural to symbolic levels}:  Each personal robot's experience can be internally represented by some logic sentence.  \\
For example,
 \textbf{The Labeling Process by Reification}:\\
To label a motor program (e.g., a routine that moves the right arm to pick up a block), the robot uses an Intensional Abstraction Operator.\\
\begin{enumerate}
  \item \textbf{The Program}: Let’s say the raw code for moving an arm is $P_{1}$ and motor programs be defined by true atoms of the binary predicate $MotPrg(x_1,x_2)$ where the variable $x_1 \in \V$ defines the row codes in neural system and $x_2 \in \V$ defines the labels (names) of these row codes, so that for this assignment $g:\V \rightarrow \D$ forb variables in $\V$, $g(x_1) = P_{1}$ and  $g(x_2) = "moving ~right~ arm"$ for which the ground atom $MotPrg(x_1,x_2)/g = MotPrg(P_1,moving ~right~ arm)$ is true.\\
  \item \textbf{The Logic Term}: The robot uses the abstraction operator $\lessdot\_ \gtrdot$ to create a "name" for this code. The label becomes an intensional entity - a symbol the robot can think about without actually running the code. In fact by transformation of this logic predicate into abstracted term, we obtain (by using the intensional mapping $I$) that

      $u= g^*(\lessdot MotPrg(x_1,x_2)\gtrdot^{x_1}_{x_2}) = I(MotPrg(g(x_1),x_2))$

  $= I(MotPrg(P_1,x_2)) \in D_1$ is an intensional entity (an unary concept) such that its extension (for extensionalization function $h$) is just a singleton, i.e., $h(u) = \{moving ~right~ arm\}$. Thus, the intensional entity $u = I(MotPrg(P_1,x_2))$ can be used as a label (name) for the raw code $P_1$.\\
  \item \textbf{The Predicate}: The robot then uses a ternary predicate like $Exec(y_1,y_2, \lessdot \phi(\textbf{x})\gtrdot^\beta_\alpha)$ with $g(y_1) = "in~ present"$ and $g(y_2) = \textbf{I}$, such that

       $Exec(y_1,y_2, \lessdot MotPrg(x_1,x_2)\gtrdot^{x_1}_{x_2})/g = Exec(in~ present, \textbf{I}, I(MotPrg(P_1,x_2)))$ \\
        translates to:

  "\emph{\textbf{I} (the coordinator) am currently executing the motor program of raw code} $P_1$."\\

\item \textbf{Reaching Conscious Knowledge}: this sentence in a temporary memory of symbolic AI level of robot can be inserted into conscious robot's knowledge by epistemic predicate $Know$ and can be used for other logic deductions.
\end{enumerate}
 By this personal, based on robots experiences, learning process, the \emph{scalability} of the robot's conscious knowledge is guaranteed by this bridge from the neural to symbolic robot's knowledge.
\section{The 4-valued Autoepistemic Deduction of AGI Robots}
Differently from the the knowledge Database of robot's intensional n-ary concepts (provided in previous Section), the robots current \emph{logic knowledge} (composed by ground atoms of the distinguished predicate $Know$ in Definition \ref{def:syntax}) is directly derived from its experiences (based on its neuro-system processes that robot is using in this actual world), in an analog way as human brain does:
  \begin{itemize}
    \item As an activation (under robot's attention) of its neuro-system process, as a consequence of some human command to execute some particular job.
    \item As an activation of some process under current attention of robot, which is part of some complex plan of robot's activities connected with its general objectives and services.
  \end{itemize}
  In both cases, for a given assignment of variables $g:\V\rightarrow \D$ of virtual predicate $\psi(\textbf{x})$ with the \emph{set} of variables  $\beta =\overline{\textbf{x}} $, which  \emph{L-concept} $I(\psi(\textbf{x})/g) \in D_{1}$ is grounded by this particular process, is transformed into abstracted term and hence robot's knowledge system generates the new ground knowledge atom with three terms $t_i$, $i = 1,2,3$, $Know(t_1,t_2,t_3)/g$ with  abstracted term $t_3 = \lessdot \psi(\textbf{x})\gtrdot^\beta$   in robot's temporary memory.
  So, the robot's logic knowledge, which partecipates in deduction processes as well, is composed by a set of known robot's sentences  $\psi(\textbf{x})/g$ heaving logic truth-values of Belnap's billattice.\\
  Consequently,  the  explicit (conscious) robot's  logic knowledge in actual world  (current time-instance) here is represented by the ground atoms of the distinguished  $Know$ predicate in Definition \ref{def:syntax},
 \begin{equation} \label{eq:esem2}
  Know(t_1,t_2,\lessdot \psi(\textbf{x})\gtrdot^\beta)/g = Know(g^*(t_1),g^*(t_2), g^*(\lessdot \psi(\textbf{x})\gtrdot^\beta))
  \end{equation}
  such that $g^*(t_1) = in ~ present$ and $g^*(t_2) = me$ (the robot itself),  for the extended assignments $g^*:\T \rightarrow \D$ in Definition \ref{def:abstrConv} to all terms in $\T$.

  We recall that each robot's extensionalization function $h$ in (\ref{eq:dueM6}) is indexed by the time-instance. The actual robot's world extensionalization function (in the current instance of time) is denoted by $\hbar$, and determines the current robot's knowledge.
 Clearly, the robots knowledge changes in time and hence determines the extensionalization function $h $ in any given instance of time, based on robots experiences. Thus, as for humans, also the robot's knowledge and logic is a kind of temporal logic, and evolves with time.
 \\
   \textbf{Remark}:  We consider that only robot's experiences (under robot's attention) are transformed into the ground atoms of the $Know$ predicate, and the required (by robot) deductions from them (by using $IFOL_B$ deduction extended by the three epistemic axioms in what follows) are transformed into ground atoms of $Know$ predicate, and hence are saved in robot's temporary memory as a part of robot's \emph{conscience}.\\ Some background process (unconscious for the robot) would successively transform these temporary memory knowledge into permanent robot's knowledge in an analog way as it happen for humans.
   \\$\square$\\
 Based on these considerations, we can specify exactly what are semantically the robot's knowledge models:
 \begin{definition} \label{def:Kmodels} \textsc{Logic Knowledge Models}:\\
 Herbrand interpretations  $v:H \rightarrow X$ in Definition \ref{def:HerbrandValuat} are denominated \textsl{knowledge interpretations}, if for each ground atom in (\ref{eq:esem2}) of the $T$  and $Know$ predicates the logic truth-value of them is equal just to the truth-value of the  sentence reified in the ground atoms of these two predicates as an abstracted term,
 \begin{multline} \label{eq:esem3s}
 v^*(Know(t_1,t_2,\lessdot \psi(\textbf{x})\gtrdot^\beta)/g ) = v^*(\psi(\textbf{x})/g)\in X\\
  v^*(T(\lessdot \psi(\textbf{x})/g \gtrdot)) = v^*(\psi(\textbf{x})/g) \in X~~~~~~~~~~~~~~~~~~~~~~~~~~~~~~~~~~~~~~~~~~~~~~~~~~~~~~~~
 \end{multline}
 \end{definition}
Thus, in the actual world $h$ (actual time-instance), the known fact (\ref{eq:esem2}) for robot becomes the ground atom
 \begin{equation} \label{eq:esem3}
  Know(t_1,t_2,\lessdot \psi(\textbf{x})\gtrdot^\beta)/g = Know(in ~ present,me, I(\psi(\textbf{x})/g))
  \end{equation}
  with a L-concept $I(\psi(\textbf{x})/g)) \in D_1$  and, from the fact that $\textbf{I}_B = h\circ I$, we obtain
 \begin{equation} \label{eq:esem3sb}
  h(I(\psi(\textbf{x})/g)))=\textbf{I}_B(\psi(\textbf{x})/g ) =
 \left\{
    \begin{array}{ll}
 \{a\} \in \widetilde{X} \subset \mathfrak{Rm}, & \hbox{if $~~a = v^*(\psi(\textbf{x})/g) \neq \bot$}\\
      \emptyset, & \hbox{otherwise}
    \end{array}
  \right.
  \end{equation}
  While, for the L-concept $u=I(Know(t_1,t_2,\lessdot \psi(\textbf{x})\gtrdot^\beta)/g )\in D_1$, we obtain \\
   $\hbar(I(Know(t_1,t_2,\lessdot \psi(\textbf{x})\gtrdot^\beta)/g )) =
 \left\{
    \begin{array}{ll}
 \{v^*(\psi(\textbf{x})/g)\} \in \widetilde{X}, & \hbox{if $~~v^*(\psi(\textbf{x})/g) \neq \bot$}\\
      \emptyset, & \hbox{otherwise}
    \end{array}
  \right.$\\
 that is, for this \emph{current} MV-model $\textbf{I}_B=h\circ I$, and a ground atom of the $Know$ predicate, we obtain
 \begin{equation} \label{eq:esem3s1}
 \textbf{I}_B(Know(t_1,t_2,\lessdot \psi(\textbf{x})\gtrdot^\beta)/g ) =
 \left\{
    \begin{array}{ll}
 \{a\} \in \widetilde{X} \subset \mathfrak{Rm}, & \hbox{if $~~a = v^*(\psi(\textbf{x})/g) \neq \bot$}\\
      \emptyset, & \hbox{otherwise}
    \end{array}
  \right.
 \end{equation}
 and consequently from (\ref{eq:esem3sb}) and (\ref{eq:esem3s1}), in accordance with (\ref{eq:esem3s}), we obtain the  MV-model knowledge-constraint
 \begin{equation} \label{eq:esem3MVc}
 \textbf{I}_B(Know(t_1,t_2,\lessdot \psi(\textbf{x})\gtrdot^\beta)/g ) = \textbf{I}_B(\psi(\textbf{x})/g)
 \end{equation}
Thus, we defined by using the predicate $Know$ what the robot knows about each its sentence $\psi(\textbf{x})/g$. However, we have to define also the opposite feature about natural language sentences that use the "do not know" in all temporal conditions (does not know, did not know, will not know), generally expressed in natural language by:

"I do not know if $\psi(\textbf{x})/g$",

"I do not know weather $\psi(\textbf{x})/g$"\\
just because their translation into a logic formula of $IFOL_B$ needs to use this $Know$ predicate as well, and the fact that this $\phi(\textbf{x})/g$ has to be \emph{unknown}, that is $v^*(\phi(\textbf{x})/g) = \bot$. It is clear that such logic formula must be different from $\neg Know(t_1,t_2,\lessdot \psi(\textbf{x})\gtrdot^\beta)/g$ because $\neg \bot \neq t$.

In accordance with the Closed Knowledge Assumption, if we do not know something, this something must be \emph{unknown}.
Consequently, this formula is provided by (see the 2-valued logic truth-scheme formulae for any truth-value $a \in X$, given by (3) in \cite{Majk25})
\begin{equation} \label{dnKnow}
p_\bot \Leftrightarrow Know(t_1,t_2,\lessdot \psi(\textbf{x})\gtrdot^\beta)/g
\end{equation}
where $p_\bot$ is the built-in propositional letter such that for each many-valued valuation $v^*(p_\bot) = \bot$. In fact, the sentence in (\ref{dnKnow}) is true only if $v^*(Know(t_1,t_2,\lessdot \psi(\textbf{x})\gtrdot^\beta)/g) = \bot$, that is, from (\ref{eq:esem3s}),  iff $v^*(\psi(\textbf{x})/g) = \bot$ is unknown.\\
So the formula (\ref{dnKnow}) can be expressed  in English, for example, by:

" I \emph{do not know} wether $\psi(\textbf{x})/g$", \\
if $g^*(t_1) = me$ and $g^*(t_2) = in ~present$.   Otherwise by

" $g^*(t_1)$ \emph{does not know} wether $\psi(\textbf{x})/g$",  ~~~~~~if $g^*(t_2) = in ~present$

" $g^*(t_1)$ \emph{did not know} wether $\psi(\textbf{x})/g$",  ~~~~~~~~if $g^*(t_2) = in ~past$

" $g^*(t_1)$ \emph{will not know} wether $\psi(\textbf{x})/g$",  ~~~~~~~~if $g^*(t_2) = in ~future$.
\\\\
%
\textbf{Remark}: Note that for the assignments $g:\V\rightarrow \D$, such that $g^*(t_1)= in ~future$ and $g^*(t_2)= me$ we consider robot's hypothetical knowledge in future, while in the cases when $g^*(t_1) = in ~past$ we consider what was robot's knowledge in the past. Consequently, generally the predicates of IFOL for robots, based on the dynamic changes of its knowledge has to be indexed by the time-instances (which are possible worlds of IFOL), for example by using an additional predicate's argument for them. \\
 In the examples in next, we will consider only the case of robot's current knowledge (in the actual world with extensional function $h$) when $g^*(t_1) =in ~present$, so we avoid the introduction of the time-instance variable for the predicates; only at the remark at the end of this section we will show how to use time-variable $\tau$.
\\$\square$\\
In what follows, the robot's knowledge system will use only the knowledge interpretations $v:H\rightarrow X$ and their extensions $v^*:\L_0\rightarrow X$ as described by Definition \ref{def:Kmodels}, and we denote by $\mathbb{V}$ the set of all such  knowledge interpretations.
 As an inference many-valued system, we can use the truth-preserving entailment provided by Definition 57 in Chapter 5 of \cite{Majk22}, here rewritten using notations used in this paper by:
\begin{definition}\cite{Majk22} \label{def:seqSatisf} \textsc{New truth-preserving entailment:} Let $\Gamma = \{\phi_1,..,\phi_n \} \subset \L_0$ be a set of sentences and we define the truth-preserving entailment in two alternative cases:

1. \emph{Sequent-based entailment} with sequents
$\Gamma \vdash \psi$. We say that  $v \in \mathbb{V}$ \emph{satisfies} this sequent iff $v^*(\phi_1) \bigwedge...\bigwedge v^*(\phi_n)\leq v^*(\psi)$,  where $\bigwedge$ is the meet operator in lattice $(X.\leq)$, and we denote this sequent satisfaction by  $\Gamma \models_v \psi$. A sequent is \textsl{valid}  if it is satisfied for all  $v \in \mathbb{V}$.

We denote by $\phi \preceq \psi$ iff for all $v \in \mathbb{V}$, $v^*(\phi) \leq v^*(\psi)$.

Valuation $v$ satisfies a rule  $\frac{s_1, ..., s_k}{s}$ iff $v$ satisfies the conclusion sequent $s$ of this rule whenever it satisfies all sequent premises
$s_1, ..., s_k$ of this rule. Then, a \emph{model} of this logic is any valuation $v\in \mathbb{V}$ which satisfies all logic sequent rules of this logic (the structural sequent rules as  Cut, Weakening, etc.. are satisfied by all valuations).

2. \emph{Model-based entailment} where $\Gamma$ is the set of theses $\phi_i\in \L_0$ with associated prefixed low-boundary truth values $a_i \in X$,  such that their \emph{models} are defined as a subset of all knowledge interpretations in $\mathbb{V}$:
\begin{equation} \label{eq:MVmodels}
\mathbb{V}_{\Gamma} = _{def} \{v \in \mathbb{V} \mid \forall \phi_i \in \Gamma. (v^*(\phi_i) \geq a_i)\}~~ \subset \mathbb{V}
\end{equation}
that is, a knowledge interpretation $v \in \mathbb{V}$ is a model of the theses in $\Gamma$ if the truth-value of each sentence $\phi_i$ in $\Gamma$ is greater or equal to its specified low-boundary truth-value $a_i$.
So, we introduce the model-based truth-preserving entailments of a sentence $\psi$ from the theses in $\Gamma$  and $v^*\in \mathbb{V}_{\Gamma}$ by
\begin{multline} \label{eq:MVtruth-preserv}
\Gamma \models_v \psi~~~ \emph{iff } ~~~(\exists \phi_i \in \Gamma). (  v^*(\psi)\geq v^*(\phi_i))\\
\Gamma \models \psi~~~ \emph{iff } ~~~ (\forall v\in \mathbb{V}_{\Gamma}).\Gamma \models_v \psi~~~~~~~~~~~~~~~~~~
\end{multline}
\end{definition}
Note that (\ref{eq:MVmodels}) is a generalization of the standard 2-valued logic case when all $a_i = t$ are  true.
\\
\textbf{Remark}: In this moment we are not interested to the possibility of  a robot to derive by himself the paradoxes, like Liar paradox discussed in next section that requires the high level mathematical ability (by using G$\ddot{o}$del numbering and creation by himself of new predicates like $Diag$ and Peano arithmetic FOL theorems, for example. In this stage it is enough only to support the existence of paradoxes by robot's knowledge base without any logic problem (such paradoxes can be provided by humans in initial robot's knowledge like all ground atoms of built-in predicates).
\\$\square$\\
 So,  for the sentences $\phi$  that are paradoxes (Like the Liar paradox discussed in next section) the inconsistent truth-value $\top$ is used:
 \begin{definition} \label{def:paradox}
 A \textsl{paradox} can be considered as particular built-in  sentence $\phi \in \L$ (logic formula which is not an atom, without free variables) that for each knowledge interpretation $v \in \mathbb{V}$ it is always an inconsistent sentence, that is, $v^*(\phi) = \top$.
\end{definition}
Inconsistent sentences are not only the paradoxes. Consider for example a ground atom (of non built-in or distinguished predicates $T$ and $Know$) that previously robot learned to be true and successively learned to be false (and viceversa), so to them will be assigned by knowledge interpretations the inconsistent value $$\top = f \vee t$$ that will remain such for all future time of robot's life.

 So, robots can support in their knowledge also the built-in paradoxes and other inconsistent ground atoms or sentences, and to compose them with other sentences by using logic connectives and to compute by $v^*$ the truth-values of such composed formulae as well.
  For example, let $\phi$ be an inconsistent sentence and $\psi$ not: if for given knowledge interpretation $v\in \mathbb{V}$, $v^*(\psi) = f$ then $v^*(\phi \wedge \psi) = v^*(\phi) \wedge v^*(\psi)=\top \wedge f =f$ (is not more inconsistent) while if for another $v_1\in \mathbb{V}$, $v^*_1(\psi) = t$ then $v^*_1(\phi \wedge \psi) = v^*_1(\phi) \wedge v^*_1(\psi)= \top \wedge t = \top $ is inconsistent.

Notice that the entailment $\Gamma \models \psi$ is \emph{valid} entailment differently from the specific entailment $\Gamma \models_v \psi$ for a given $v \in \mathbb{V}_{\Gamma}$, used also by Ginsberg's truth preserving entailment in point 3 of Definition \ref{def:billat2} when $\Gamma$ is a singleton set, as we can show:
\begin{coro}
Let $\Gamma$ be a singleton composed by only $\phi \in \L_0$. The valid entailment $\Gamma \models \psi$ can be replaced by $\phi \models \psi$ so that (\ref{eq:MVtruth-preserv}) reduces to:
$$\phi \models \psi~~~ \emph{iff } ~~~
 (\forall v\in \mathbb{V}_{\Gamma}).\phi \models_v \psi$$
and each knowledge interpretation  $v\in \mathbb{V}$ (and hence each model of $\phi$ as well) is closed by Ginsberg's specification in Definition \ref{def:billat2}.
\end{coro}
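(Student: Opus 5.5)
The corollary has two parts, and I will treat them separately.

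The first part is the reduction of the model-based entailment to the singleton case. I will specialise (\ref{eq:MVtruth-preserv}) to $\Gamma=\{\phi\}$ with its low-boundary value $a$. Then the existential clause $(\exists \phi_i\in\Gamma).(v^*(\psi)\geq v^*(\phi_i))$ has only one witness. So $\Gamma\models_v\psi$ holds iff $v^*(\phi)\leq v^*(\psi)$, and I write this as $\phi\models_v\psi$. The quantifier over $\mathbb{V}_\Gamma$ is left unchanged, which gives $\phi\models\psi$ iff $(\forall v\in\mathbb{V}_\Gamma).\phi\models_v\psi$. This part is purely notational.

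The second part is to show that every $v\in\mathbb{V}$ is closed in the sense of Definition \ref{def:billat2}. I will check its three clauses in the order 2, 3, 1.

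Clause 2 follows directly from point 1 of Definition \ref{def:MV-algebra}: $v^*(\neg\phi)=\neg v^*(\phi)$ for every sentence. The knowledge constraints (\ref{eq:esem3s}) in Definition \ref{def:Kmodels} only fix the values of the atoms of $T$ and $Know$. The extension $v^*$ is still built by the same inductive clauses, so this homomorphic property is kept.

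Clause 3 comes from the first part. Ginsberg's entailment $\phi\models_{v^*}\psi$ for a single sentence is exactly our $\phi\models_v\psi$. By the reduction above, that is by definition $v^*(\phi)\leq v^*(\psi)$, so truth is preserved.

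Clause 1 asks that $v^*(\bigwedge_i\phi_i)\geq_k\bigwedge_i v^*(\phi_i)$. By point 2 of Definition \ref{def:MV-algebra}, iterated for finite conjunctions, we have $v^*(\bigwedge_i\phi_i)=\bigwedge_i v^*(\phi_i)$. The inequality then holds by reflexivity of $\leq_k$.

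The main obstacle is clause 1 when the conjunction is infinite, or when it is read as a universal quantification. There the truth-functional clause does not apply directly. I would first read $\bigwedge_i\phi_i$ as a finite syntactic conjunction, since $\L_0$ has only binary $\wedge$, so equality holds and $\geq_k$ follows. If an infinitary reading is needed, I would use the clause for $\forall_i$ (point 4), which is defined as the lattice meet $\bigwedge$. This again gives equality, and hence the required knowledge-ordering inequality. Completeness of $(X,\leq)$ guarantees that these meets exist.
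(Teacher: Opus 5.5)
Your proposal is correct and follows the paper's route. You treat the singleton case of (\ref{eq:MVtruth-preserv}) as a notational specialisation, check Ginsberg's three closure conditions directly from the inductive clauses of $v^*$ in Definition \ref{def:MV-algebra}, and reduce clause 3 to the identification of Ginsberg's $\models_{v^*}$ with the singleton $\models_v$ that the paper makes just before the corollary, so your write-up spells out the checks the paper only asserts are ``easy to verify''.
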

\textbf{Proof}: It is easy to verify that all three points in Definition \ref{def:billat2} are satisfied for each knowledge interpretation  $v\in \mathbb{V}$.
\\$\square$\\
Let us consider now in which way the many-valued entailment $\Gamma \models_v \psi$ generalizes the standard 2-valued FOL entailment (where all sentences in $\Gamma$ and conclusion $\psi$ must be true) for  the FOL inference rules for logic quantifiers and Modus Ponens $\phi, \phi \Rightarrow \psi \models_v \psi$. By this generalization in this many-valued intensional FOL we mean that not all premises in $\Gamma$ and conclusion $\psi$ must be true for this given interpretation $v^*$. So, we will only relax the Modus Ponens rule of the standard 2-valued FOL into this MV-Modus Pones rule in which only the implication $\phi \Rightarrow \psi$ must be true and not the thesis $\phi$ and the conclusion $\psi$:
\begin{definition} \label{def:MP}
For a given knowledge interpretation $v^*$ the MV-Modus Ponens rule (entailment) is defined as follows:
\begin{equation} \label{def:MP}
if~~ v^*(\phi \Rightarrow \psi) = t ~~~~then~~~~ \phi, \phi \Rightarrow \psi \models_v \psi
\end{equation}
The inference rules for existential and universal quantifiers used in standard 2-valued FOL as for example Universal Instantiation  entailment $(\forall x)\phi(x) \models_v \phi(x)/g$, are valid also in this many-valued many-sorted intensional $FOL_B$.
\end{definition}
This definition for inference rules for quantifiers are obviously valid in $FOL_B$, where, for example the Universal Instantiation entailment is satisfied from the fact that from point 4 in Definition \ref{def:MV-algebra} it holds that $v^*(\forall x)\phi(x) ) \leq v^*(\phi(x)/g)$ also if we do not impose the standard 2-valued FOL requirement that $v^*(\forall x)\phi(x) ) = t$. We only have to show that the MV-Modus Ponens rule in (\ref{def:MP}) is well defined generalization of standard Modus Ponens rule in 2-valued FOL.
\begin{coro} \label{coro:MP}
The MV-Modus Ponens rule in (\ref{def:MP}) is well defined generalization of standard Modus Ponens rule in 2-valued FOL.
\end{coro}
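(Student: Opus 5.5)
The plan is to establish two things: \emph{soundness} of the MV-Modus Ponens rule with respect to the sequent-based entailment of Definition \ref{def:seqSatisf}, and \emph{conservativity}, i.e.\ that on the classical sublattice $\textbf{2}=\{f,t\}$ it collapses to standard Modus Ponens.

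For soundness, fix a knowledge interpretation $v\in\mathbb{V}$ with $v^*(\phi\Rightarrow\psi)=t$, and write $a=v^*(\phi)$ and $b=v^*(\psi)$. By point 2 of Definition \ref{def:MV-algebra}, $v^*(\phi\Rightarrow\psi)=a\Rightarrow b$, where $\Rightarrow$ is the relative pseudocomplement $a\Rightarrow b=\bigvee\{z\mid z\wedge a\leq b\}$. The key lemma is the residuation property of a Heyting (distributive complete) lattice: $a\Rightarrow b=t$ iff $a\leq b$. The direction ``$\Leftarrow$'' holds because $t\wedge a=a\leq b$ gives $t\leq a\Rightarrow b$. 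For ``$\Rightarrow$'', distributivity of $(X,\leq)$ gives $(a\Rightarrow b)\wedge a\leq b$, so $a=t\wedge a\leq b$.

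I would also confirm the lemma directly on Table (\ref{tab:impl}). The entries equal to $t$ are exactly the pairs with $a\leq b$ in the order $f\leq\bot,\top\leq t$ (with $\bot\bowtie\top$), namely $x\Rightarrow t$, $f\Rightarrow x$, $\bot\Rightarrow\bot$ and $\top\Rightarrow\top$. Given the lemma, the sequent condition is immediate: $v^*(\phi)\wedge v^*(\phi\Rightarrow\psi)=a\wedge t=a\leq b=v^*(\psi)$. Hence $v$ satisfies $\phi,\phi\Rightarrow\psi\vdash\psi$, which is exactly $\phi,\phi\Rightarrow\psi\models_v\psi$. The same inequality also gives the model-based reading (\ref{eq:MVtruth-preserv}), since $v^*(\psi)\geq v^*(\phi)$ with $\phi\in\Gamma$.

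For the generalization claim, restrict to the case where $v^*$ takes only values in $\textbf{2}$, which is a subalgebra of $\A_B$ by Definition \ref{def:mv-algebra}. There, if $\phi$ and $\phi\Rightarrow\psi$ are both true, then $a=t\leq b$ forces $b=t$, which is classical Modus Ponens. Conversely, in classical MP the premise $\phi\Rightarrow\psi$ is true, so the hypothesis of (\ref{def:MP}) holds, and the conclusion ``$\psi$ true whenever $\phi$ true'' coincides with $a\wedge t\leq b$ on $\textbf{2}$. Thus MV-MP restricted to two-valued valuations is exactly standard MP. It strictly extends standard MP because it also licenses steps such as $a=b=\top$ or $a=f$ with arbitrary $b$.

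The only real obstacle is the ``$\Rightarrow$'' direction of the residuation lemma, which relies on $(X,\leq)$ being distributive. This holds because $(X,\leq)$ is the four-element Boolean lattice $\textbf{2}\times\textbf{2}$. I would state this explicitly and cite the table as a finite check, rather than invoking general Heyting-algebra theory.
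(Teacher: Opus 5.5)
Your proposal is correct and has the same core as the paper's proof. In both, $v^*(\phi\Rightarrow\psi)=t$ forces $v^*(\phi)\le v^*(\psi)$; the paper reads this off Table (\ref{tab:impl}), while you also prove it by residuation and work mainly with the sequent reading $a\wedge t\le b$. This inequality is exactly the witness required by (\ref{eq:MVtruth-preserv}) for $\Gamma=\{\phi,\phi\Rightarrow\psi\}$ with lower bounds $a_1=f$, $a_2=t$.

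The only real difference is the conservativity step. The paper stays inside the 4-valued logic and raises the lower bound of $\phi$ to $a_1=t$, which leaves only the case $v^*(\phi)=v^*(\psi)=t$. You instead restrict to $\textbf{2}$-valued valuations; read this as $v^*(\phi),v^*(\psi)\in\textbf{2}$, since every knowledge interpretation has $v^*(p_\bot)=\bot$. Your own remark that $a=t\le b$ forces $b=t$ already holds on all of $X$, and it is precisely the paper's Case 4.
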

\textbf{Proof}:
The thesis for MV-Modus/Ponens is defined by two sentences $\Gamma = \{\phi_1,\phi_2\}$ where $\phi_1$ is the sentence $\phi$ with the low-boundary truth value $a_1 = f$ and   $\phi_2$ is the sentence $\phi \Rightarrow \psi\}$ with the low-boundary truth value $a_2 = t$, so for each knowledge interpretation $v^* \in \mathbb{V}_\Gamma$ it must hold that  $v^*(\phi) \geq f$ and $v^*(\phi \Rightarrow \psi) \geq t$ (i.e., $v^*(\phi \Rightarrow \psi) =t$ as required by MV-Modus Ponens).
So, from the requirement that $v^*(\phi \Rightarrow \psi) = t$, from the definition of intuitionistic implication in table \ref{tab:impl} it must hold that
\begin{equation} \label{eq:MP}
 v^*(\phi) \leq v^*\psi)
\end{equation}
 For this thesis in $\Gamma = \{\phi,\phi \Rightarrow \psi\} $  the first sentence $\phi$ it can be any value in $X$ and we can use this thesis  with  (\ref{eq:MP}) as satisfaction in (\ref{eq:MVtruth-preserv}) of the modus point entailment $\Gamma \models_v \psi~$, which is satisfied for each knowledge interpretation $v^*$ such that:\\
 Case 1: $v^*(\phi) = f$ and $v^*(\psi) \in \{f,\bot,\top,t\}$, so from false $\phi$ we can entail any sentence as in standard 2-valued FOL Modus Ponens;\\
 Case 2: $v^*(\phi) = \bot$ and $v^*(\psi) \in \{\bot,t\}$;\\
 Case 3: $v^*(\phi) = \top$ and $v^*(\psi) \in \{\top,t\}$, so in this case from inconsistent $\phi$ by MV-Modus Ponens we can entail the inconsistent $\psi$. \\
 Case 4: $v^*(\phi) = t$ and $v^*(\psi) =t$;\\
 Based on these models $v^*$ of $\Gamma$ given in 4 cases above, we have that\emph{ valid entailment} $\phi, \phi \Rightarrow \psi \models \psi$ satisfied for \emph{all these models} of $\Gamma$, would entail $\psi$ iff in all these models $\psi$ is true (in fact the common true-value of $\psi$ in 4 cases above is just the truth-value $t$.\\
   Note that if in the thesis $\Gamma$ we set the low-boundary truth value $a_1 = t$ then we obtain Case 4 above which is standard 2-valued FOL Modus Ponens rule of inference, and hence the MV-Modus Ponens is a conservative many-valued extension of classical 2-valued Modus Ponens. Thus MV-Modus Ponens in (\ref{def:MP}) is well defined.
\\$\square$\\
 \textbf{The autoepistemic robot's reasoning}:

  The introduction of the distinguished predicate $Know$ is fundamental for the conscious  part of strong-AI robot's cognitive system, able to save the robot's learned experience in symbolic way able to support the autoepistemic logic reasoning and deductions.

  The autoepistemic logic is introduced in 2-valued propositional logic \cite{MaTr91} with added universal modal operator, usually written $K$,  and the axioms:
\begin{enumerate}
  \item Reflexive axiom \textbf{T}:   $~~K\phi \Rightarrow \phi$
  \item Positive introspection axiom \textbf{4}:  $~~K\phi\Rightarrow KK\phi$
  \item Distributive axiom \textbf{K}:   $~~(K\phi \wedge K(\phi\Rightarrow\psi)) \Rightarrow K\psi$
\end{enumerate}
for any proposition formulae $\phi$ and $\psi$.\\
However, we do not use more a \emph{pure} logical deduction of the standard 2-valued FOL, but a kind of autoepistemic deduction \cite{Majk04ph,MajkA04} with a proper set of new axioms. The modal (Kripke-like) operator $K$ will be replaced by the IFOL predicate $Know$ applied to the logic formulae transformed into \emph{abstracted terms}.
  Thus, the three epistemic axioms of epistemic modal logic with modal operator $K$,  used to obtain deductive knowledge, can be traduced into 4-valued logic $IFOL_B$ by the following axioms for the predicate $Know$, which are demonstrated to be \emph{the tautologies}\footnote{The fact that the autoepistemic axioms in $IFOL_B$ are just the tautologies (always true) means that indeed natively $IFOL_B$ is an autoepistemic logic to reason about the knowledge in typically human way able to support the Socrate's famous "I know that I do not know that..." But to support the more complex Liar paradox as explained in details in next section.} in $IFOL_B$ which in fact define the semantics of this particular $Know$ predicate, for a given assignment $g:\V\rightarrow \D$ as follows:
\begin{enumerate}
  \item   The modal axiom $\textbf{T}$,  in IFOL is represented by the axiom, for each abstracted term $\lessdot \psi(\textbf{x})\gtrdot^\beta$ in (\ref{eq:esem2}), with $\beta $ the set of variables in the tuple $\textbf{x}$,  by the axiom schema
\begin{equation} \label{eq:asio1}
  Know(t_1,t_2,\lessdot \psi(\textbf{x})\gtrdot^\beta)/g \Rightarrow \psi(\textbf{x})/g
  \end{equation}
  So by using axiom (\ref{eq:asio1}), and FOL deduction, these deductive properties of the robot can deduce  any true single fact (logical sentence) $\psi(\textbf{x})/g$ derived by its neuro-system process, and to render it to robot's consciousness as a single  known fact $Know(t_1,t_2,\lessdot \psi(\textbf{x})\gtrdot^\beta)/g$.
      That is, from (\ref{eq:assAbTerm}) with
    \begin{equation} \label{eq:asio2}
  u_1 =  g^*(\lessdot \psi(\textbf{x})\gtrdot^\beta)= I(\psi(\textbf{x})/g) \in D_1
  \end{equation}
such that $h(u_1) = h(I(\psi(\textbf{x})/g)) \{a\}$ when $\bot \neq a\in X$ (otherwise we set $a =\bot$) is the truth-value of this robot's known sentence $\psi(\textbf{x})/g$ (if this is a Liar sentence (paradox), then $a = \top$ is inconsistent truth-value, and robot is able to work also with inconsistent knowledge as well.  So, this $\textbf{T}$  axiom (\ref{eq:asio1}) becomes
\begin{equation} \label{eq:asio1b}
  Know(g^*(t_1),g^*(t_2),u_1) \Rightarrow \psi(\textbf{x})/g
    \end{equation}
Note that the meaning of the intensional concept $u_1$ of the robot is grounded on robot's neuro-system process, which is just robot's current internal experience of what is he doing. For each knowledge model $v^*$ the truth-values of both sides of implication in (\ref{eq:asio1}) are equal, so the truth-value of this logic schema (\ref{eq:asio1}) is true (lies in the diagonal of the truth-table (\ref{tab:impl})), that is, it is really an reflexive axiom.\\
Consequently, the application of the \textbf{T} axiom  allows the extraction from robot's conscious knowledge the \emph{logical sentences} which, successively, can be elaborated by robot's implemented deductive property of FOL in two ways:\\

a.1. To respond to some human natural language questions (parsed into a logical formula) and to verify if the response  is "yes" or "no", or "I do not know" (if robot's conscious knowledge is incomplete for such a question);\\

a.2. To deduce another sentences which then can be inserted in robot's  conscious knowledge as ground atoms of the predicate $Know$ (where this deduced sentence is represented as an abstracted term). This process (in background, or when robot is free of other concrete activities) can be considered as a kind of consolidation and completion of robot's knowledge based on previous experiences, in an analog way as it is done by human mind when we sleep.\\
\item The positive introspection axiom \textbf{4}:
\begin{equation} \label{eq:asio1b4}
 Know(t_1,t_2,\lessdot \psi(\textbf{x})\gtrdot^\beta)/g \Rightarrow Know(g^*(t_1),g^*(t_2),\lessdot Know(t_1,t_2,\lessdot \psi(\textbf{x})\gtrdot^\beta)/g\gtrdot)
 \end{equation}
 that is.
 \begin{multline} \label{eq:asio1b4a}
 Know(g^*(t_1),g^*(t_2),g^*(\lessdot \psi(\textbf{x})\gtrdot^\beta)) \Rightarrow\\ Know(g^*(t_1),g^*(t_2),\lessdot Know(g^*(t_1),g^*(t_2),g^*(\lessdot \psi(\textbf{x})\gtrdot^\beta))\gtrdot)
  \end{multline}
 which, in the case when $g(y_1) = in ~present$ and $g(y_2) = me$, is traduced in natural language by robot as:\\
 "I know that $\psi(\textbf{x})/g$" implies "I know that I know that $\psi(\textbf{x})/g$"\\
 where $g^*(\lessdot \psi(\textbf{x})\gtrdot^\beta) = I(\psi(\textbf{x})/g)$ contains the logic truth-value of the known sentence $\psi(\textbf{x})/g$ as explained by (\ref{eq:asio2}).\\
 For each knowledge model $v^*$ the truth-values of both sides of implication in (\ref{eq:asio1b4a}) are equal, so the truth-value of this logic schema (\ref{eq:asio1b4}) is true (lies in the diagonal of the truth-table (\ref{tab:impl})), that is, it is really an introspection axiom schema.\\
  \item The distributive axiom \textbf{K} ("modal Modus Ponens"):
  \begin{multline} \label{eq:asio1b4}
 (Know(t_1,t_2,\lessdot \psi(\textbf{x})\gtrdot^\beta)/g \wedge Know(t_1,t_2,\lessdot \psi(\textbf{x}) \Rightarrow \phi(\textbf{z})\gtrdot^{\beta\bigcup\beta_1})/g) \\
 \Rightarrow Know(t_1,t_2,\lessdot \phi(\textbf{z})\gtrdot^{\beta_1})/g
    \end{multline}
    with the \emph{sets} of variables  $ \beta = \overline{\textbf{x}}$ and $\beta_1 = \overline{\textbf{z}}$. Or, equivalently,
  \begin{multline} \label{eq:asio1b4c}
 (Know(g^*(t_1),g^*(t_2),g^*(\lessdot \psi(\textbf{x})\gtrdot^{\beta})) ~\wedge\\ Know(g^*(t_1),g^*(t_2),g^*(\lessdot \psi(\textbf{x}) \Rightarrow \phi(\textbf{z})\gtrdot^{\beta\bigcup\beta_1}))) \\
 \Rightarrow Know(g^*(t_1),g^*(t_2),g^*(\lessdot \phi(\textbf{z})\gtrdot^{\beta_1}))
    \end{multline}
 Note that this axiom schema is a way how the robot provides the conscious implications $\psi(\textbf{x})/g \Rightarrow \phi(\textbf{z})/g$,
 independently if they are its innate implemented rules (introduced in robot's knowledge when it is created) or if they are learned by robot's own experience.  For any knowledge model $v^*$, if we apply it to the logic formula (\ref{eq:asio1b4c}) by using (\ref{eq:esem3s}), from point 2 of Definition \ref{def:MV-algebra} for binary operators $\wedge$ and $\Rightarrow$, we obtain the propositional formula with truth-values $a = v^*(\psi(\textbf{x})/g)\in X$ and $b = v^*(\phi(\textbf{z})/g)\in X$,
 $$(a \wedge (a \Rightarrow b)) \Rightarrow b$$
 which is true for all cases of $a, b \in X$. Thus (\ref{eq:asio1b4}) is a valid axiom schema.
\end{enumerate}
\textbf{Remark}: We recall that this method of application of autoepistemic deduction  (for concepts such as \emph{knowledge}) can be applied to all other modal logic operators (for concepts such as  \emph{belief}, \emph{obligation}, \emph{causation}, \emph{hopes}, \emph{desires}, etc., for example by using \emph{deontic} modal logic that same statement have to represent a \emph{moral obligation} for robots), by introducing special predicates  for them with the proper set of axioms for their active semantics (fixing their meaning and deductive usage).

  By such fixing by humans of robot's unconscious  part with active semantics (which can not be modified by robots and their live experience) of all significant for human robot's concepts and their properties, we will obtain ethically confident and socially safe and non danger robots (controlled by public human  ethical security organizations for the production of robots with general strong-AI capabilities).
 \\$\square$\\
 Note that the obtained robot's knowledge,  from the known facts  at the end of deduction, is in robot's temporary memory. In order to render it permanent (by cyclic process of transformation of the temporary into permanent robot's memory), we need to add to any predicate of the robot's FOL syntax, also the time-variable as, for example, the first variable of each predicate (different from $Know$), instantiated in the known facts by the  \emph{tamestamp} value $\tau$ (date/time) when this knowledge of robot is transferred into permanent memory.
This temporization of all predicates used in robot's knowledge is useful for robot to search all known facts in its permanent memory that are inside some time-interval as well.

 It can be used not only to answer directly to some human questions about robot's knowledge, but also to extract only a part of robot's knowledge from its permanent memory in order to be used for robot's deduction, and hence to answer to human more complex questions that require deduction of new facts not already deposited in explicit robot's known facts.

 The last argument of this section will be dedicated to the evolution of robot's knowledge in time.\\
\textbf{The problems about knowledge fixpoints}:

 Based on  the bilattice $(X,\leq,\leq_k)$ of truth-values in $X$ we are able to define the truth and knowledge partial orders
  over the knowledge interpretations in $\mathbb{V}$ as well:
  \begin{definition} \label{def:BIL}
    Let $H_0 \subset H$ be the subset of ground atoms of predicates in $P$ that are not built-in or distinguished predicates.  Then for $v^*_1, v^*_2 \in \mathbb{V}$ we have that\\

1. Truth partial order $\preccurlyeq$ is defined by

$v^*_1 \preccurlyeq v^*_2$ ~~~~iff~~~~
$(\forall A \in H_0). v_1(A) \leq v_2(A)$;\\

2. Knowledge partial order $\preccurlyeq_k$ is defined by

$v^*_1 \preccurlyeq_k v^*_2$~~~~
iff ~~~~$(\forall A \in H_0). v_1(A) \leq_k v_2(A)$.
\end{definition}
Note that the truth-value of ground atoms of built-in predicates are predefined initially and do not change in time, while the ground atoms of distinguished predicates $T$ and $Know$ are derived from the truth-values of ground atoms that compose the sentence in their abstracted terms. Because of that we used only ground atoms in $H_0$ for definition of two partial orders in definition above.

In particular, if we are considering the growing in time of the robots knowledge, with his knowledge interpretation $v_i$ at time instance $\tau_i$, and  the knowledge interpretation $v_{i+1}$ at time instance $\tau_{i+1}$, with $\tau_i < \tau_{i+1}$.

Let us denote by $v^*_0$ the bottom knowledge interpretation in which each ground atom $A \in H_0$ (of non-distinguished predicates) is unknown. So,
\begin{enumerate}
\item  The robot's learning process which increments its knowledge is obtained by passing some ground atom $A$ (of non-distinguished and non-built-in predicates) from the initial truth-value $\bot$ (unknown), to the learned by robot truth-value $f$ (false) or $t$ (true). An  atom is not learned by a robot  to be unknown, but only initially (at the beginning when a  robot is created) for default it is set to be unknown in robot's initial (bottom) knowledge interpretation $v^*_0$. Thus one time robot learned that an atom is true or false, successively can not learn that it is unknown.\\It is possible that such an atom that robot previously learned to be false, in future by different robot's experience can become inconsistent (if robot learned successively from different sources that it could be true). Analogously, an atom previously learned to be true, in future can  become inconsistent.
\item  If an ground atom becomes inconsistent its truth-value can not be successively changed, like in the case of built-in paradoxes\footnote{In effect, all human paradoxes are initially introduced in robot's knowledge as complex sentences with requirement that, in each knowledge interpretation of them, have the same constant built-in truth-value $\top$}.
\end{enumerate}
Based on the considerations above and Definition \ref{def:BIL}, we obtain the following result:
\begin{lemma} \label{lemma:monK}
 We have that during time-evolution for $\tau_i \leq \tau_{i+1}$ of robot's knowledge interpretations  it is always satisfied that
 \begin{equation} \label{eq:knowledge}
 v^*_i\preccurlyeq_k v^*_{i+1}
\end{equation}
 that is, we have a monotonic increments of knowledge.
\end{lemma}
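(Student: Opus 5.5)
By Definition \ref{def:BIL}, $v^*_i\preccurlyeq_k v^*_{i+1}$ depends only on the Herbrand interpretations restricted to $H_0$. So the plan is to reduce the statement to a pointwise claim: for every ground atom $A\in H_0$ and every step $\tau_i\leq\tau_{i+1}$ we have $v_i(A)\leq_k v_{i+1}(A)$. Ground atoms of built-in predicates are fixed for all interpretations, and those of the distinguished predicates $T$ and $Know$ are determined by (\ref{eq:esem3s}). Neither kind enters the definition of $\preccurlyeq_k$, so nothing needs to be checked for them. The case $\tau_i=\tau_{i+1}$ is trivial by reflexivity of $\leq_k$, so I assume $\tau_i<\tau_{i+1}$. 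I also note that $\preccurlyeq_k$ is transitive, since $\leq_k$ is; hence it suffices to treat a single elementary learning event, in which the robot changes the value of one atom $A$. All other atoms keep their values, so for them the inequality holds by reflexivity.

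The core of the argument is a case analysis on the current value $v_i(A)$, using the two learning rules stated just before the lemma. If $v_i(A)=\bot$, then by rule 1 an update can only assign $f$ or $t$ (an atom is never learned to be unknown). Since $\bot$ is the bottom of $(X,\leq_k)$, we get $\bot\leq_k f$ and $\bot\leq_k t$. If $v_i(A)=f$, rule 1 forbids a return to $\bot$, and the only admitted change comes from learning the opposite value from another source. The new value is then $f\oplus t=\top$, and $f\leq_k\top$. The case $v_i(A)=t$ is symmetric, giving $t\leq_k \top$. If $v_i(A)=\top$, rule 2 says the value is frozen, so $v_{i+1}(A)=\top$. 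In every case $v_i(A)\leq_k v_{i+1}(A)$. A neat way to present this uniformly is to observe that every update has the form $v_{i+1}(A)=v_i(A)\oplus b$ with $b\in\{f,t\}$ the newly learned value. This is consistent with the paper's description $\top=f\vee t$ of repeated contradictory learning, read in the knowledge lattice as $f\oplus t$, and $x\leq_k x\oplus b$ holds trivially. Through the isomorphism $is:(X,\leq_k)\simeq(\widetilde{X},\subseteq)$ of Definition \ref{def:m-sortedExtensions}, learning is then literally set union, hence monotone inclusion.

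Finally I would induct over the sequence of learning events between $\tau_i$ and $\tau_{i+1}$ and use transitivity to conclude (\ref{eq:knowledge}) for arbitrary $\tau_i\leq\tau_{i+1}$. The main obstacle is that the "learning process" is described informally rather than as a formal transition operator. The proof must therefore first fix a precise reading of rules 1 and 2, namely the join-with-$\oplus$ update above, and argue that every admissible change of a value is of that form. Once this modelling step is fixed, the rest is a routine verification in the four-element lattice $(X,\leq_k)$. I would state explicitly that derived atoms (of $T$ and $Know$) are excluded, because their values may change non-monotonically in the knowledge order: for example, a sentence $p_\bot\Leftrightarrow\ldots$ can pass from $t$ to $f$. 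This is why Definition \ref{def:BIL} restricts to $H_0$.
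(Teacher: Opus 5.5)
Your proposal is correct and follows the paper's proof. The paper also argues pointwise over $H_0$: the only admissible changes of a ground atom are $\bot\mapsto f$, $\bot\mapsto t$, $f\mapsto\top$, $t\mapsto\top$ and $\top\mapsto\top$, each increasing in $\leq_k$, and it then concludes by Definition~\ref{def:BIL}. Your $\oplus$-update reading (set union under $is$) and the induction over single-atom learning events are extra packaging not in the paper, and the decomposition into single-atom events is unnecessary, since the pointwise argument already covers simultaneous changes.
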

\textbf{Proof}: We have that for each ground atom $A \in H_0$ its truth-value can evolve only in this way:\\
1. $\bot \mapsto f$  of $\bot \mapsto t$;\\
2. $f \mapsto \top$;\\
3. $t \mapsto \top$;\\
4. $\top \mapsto \top$.\\
Thus, all changes of truth-values of ground atoms in $H_0$ follow the increments of knowledge in the lattice of knowledge ordering of Belnap's bilattice. Thus, form Definition \ref{def:BIL} the relationship (\ref{eq:knowledge}) is valid.
\\$\square$\\
Consequently, given such time-ordering of extended robot's knowledge, expressed by $v^*_i$ in the ordered sequence $v^*_0 \prec_k...\prec_k v^*_i\prec_k v^*_{i+1}\prec_k ...$, based on Lemma \ref{lemma:monK}, we are able to define the monotonic function $f_k:\mathbb{V}\rightarrow \mathbb{V}$ such that for $i = 1,2,3,...$
\begin{equation} \label{eq:monK}
 f_k(v^*_i) =  v^*_{i+1}
\end{equation}
In fact, for any two $v^*_j, v^*_n$ such that $v^*_j\preccurlyeq_k v^*_n$ we have the monotonic property $f_k(v^*_j)\preccurlyeq f_k(v^*_n)$.
Note that we do not guarantee the monotonicity w.r.t the truth-ordering in time evolution of robots knowledge. So, we obtain the following result:
\begin{coro} \label{coro:monK}
In any ordered sequence $v^*_0 \prec_k...\prec_k v^*_i$  with $S = \{A\in H_0 |v_i(A) =\bot\}$  if $S$ is empty then $v^*_i$ can be the fixed point of $f_k$, that is, $f_k(v^*_i) = v^*_i$.
\end{coro}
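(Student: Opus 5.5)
The plan is to use the evolution rules for ground atoms in $H_0$ (points 1 and 2 before Lemma \ref{lemma:monK}) together with Lemma \ref{lemma:monK} itself. The goal is to show that once $S=\emptyset$, the only knowledge interpretation that $f_k$ can produce from $v^*_i$ is $v^*_i$. Since knowledge interpretations in $\mathbb{V}$ are determined by their Herbrand part, it is enough to argue atom by atom on $H_0$. Built-in atoms are invariant by Definition \ref{def:HerbrandValuat}. The atoms of the distinguished predicates $T$ and $Know$ are fixed by the constraints (\ref{eq:esem3s}) of Definition \ref{def:Kmodels} as functions of the truth-values of the sentences they reify. Hence if $v_{i+1}$ and $v_i$ agree on $H_0$, the unique extension of Definition \ref{def:MV-algebra} gives $v^*_{i+1}=v^*_i$.

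First, $S=\emptyset$ means that every $A\in H_0$ has $v_i(A)\in\{f,t,\top\}$. Next, apply the case analysis from the proof of Lemma \ref{lemma:monK}. The rule $\bot\mapsto f$ or $\bot\mapsto t$ is unavailable because no atom is unknown. Atoms valued $\top$ remain $\top$ forever, by point 2. Atoms valued $f$ or $t$ can either stay put or move to $\top$.

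The fixed-point claim is phrased as ``\emph{can be}'' the fixed point. So the step I would make precise is that $v^*_{i+1}$ is one of the admissible successors allowed by Lemma \ref{lemma:monK}. If no new conflicting experience arrives, i.e.\ no $f$- or $t$-atom is contradicted, the admissible choice $v_{i+1}(A)=v_i(A)$ for all $A\in H_0$ yields $f_k(v^*_i)=v^*_i$. Moreover, the chain $v^*_i\preccurlyeq_k f_k(v^*_i)\preccurlyeq_k\cdots$ is bounded above by the interpretation that sends all of $H_0$ to $\top$, the top of $(X,\leq_k)$. Each atom can change at most once more, namely $f\mapsto\top$ or $t\mapsto\top$. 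So, at least when $H_0$ is finite, the sequence stabilizes after finitely many strict steps, and its limit is a genuine fixed point. In the special case where all atoms are already $\top$, every successor must equal $v^*_i$.

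The main obstacle I expect is the gap between ``can be'' and ``is.'' With $S$ empty, $f_k$ is still not forced to be the identity, since $f\mapsto\top$ remains possible. The statement therefore only asserts that a fixed point is attainable. I would handle this by showing that the identity successor is consistent with the evolution rules and with the monotonicity of $f_k$, and by noting the forced fixed point when all atoms are $\top$. Along the way I would also check that the distinguished atoms introduce no extra variation, which follows from (\ref{eq:esem3s}).
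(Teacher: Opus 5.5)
Your proposal is correct and follows the paper's own argument. With $S$ empty the transitions $\bot\mapsto f$ and $\bot\mapsto t$ are gone, so once no further inconsistency arises (the paper's ``if all inconsistencies are found'') no atom of $H_0$ can change and $f_k(v^*_i)=v^*_i$; your remarks on the distinguished atoms, the all-$\top$ case and stabilization for finite $H_0$ add detail to the paper's one-line proof without changing the route.
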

\textbf{Proof}: If $S$ is empty than if all inconsistencies are found, $v^*_i$ is fixed point, that is,  successive changes of robot's knowledge are impossible, that is, it will not be possible for a robot to learn more by its existing logic system.
\\$\square$\\
Of course, if robot reaches a knowledge fixpoint  by learning process robot can not increment its reached knowledge. \\
So, in order to avoid knowledge fixpoints,in order to  be able to increment its knowledge,
the AGI robots must be able to learn and generate (like the humans) from the natural language an extra new predicate $p^k_j \notin P$ (w.r.t the initially defined set $P$) whose atom with all arguments the sorted variables in k-ary tuple of variables $\textbf{x}$, $p^k_j(\textbf{x})$ and its corresponding PRP concept (real or abstract) $u = I(p^k_j(\textbf{x}))$ where $I$ is new intensional  interpretation  by initial assignment to all ground atoms of this new predicates the logic truth-value $\bot$. That with empty set $h(u)$ of extension of this new concept.
\section{Paradoxes: Resolution of Liar Formula in $\textit{IFOL}_B$  \label{section:Liar}}
Let us introduce the Liar paradox in the standard 2-valued FOL of
arithmetic (standard \emph{2-valued} first-order Peano Arithmetic),
extended by a primitive satisfaction binary predicate
(G$\ddot{o}$del) $Sat(x,y)$, governed by Tarskian axioms. This
system of axioms governing $Sat(x,y)$ was given in \cite{Tar1936}.

The philosophical significance of such FOL logic is discussed in
\cite{Ketl99}, compared with minimalistic extensions generated by
adding just the "\textbf{T}-schema" (the set of formulae \index{Liar formula resolution}
$\textbf{T}(\ulcorner\phi\urcorner) \Leftrightarrow \phi$ where
$\ulcorner\phi\urcorner$ is the code (positive integer) obtained by
G$\ddot{o}$del's codification of logic formula $\phi$, and $\textbf{T}$ is
unary \emph{2-valued} truth-predicate whose argument has the sort of natural numbers $N$.
Similar constructions are considered in \cite{Shap98}.

 The idea is then $Sat(x,y)$ expresses the satisfaction relation between (codes of) formulae and (codes of) sequences of the formulae. Then $\textbf{T}(x)$ expresses the \emph{concept of truth} for such formulae and sequences. In fact, it can be shown
that this theory satisfies Tarski's Convention (T) in \cite{Majk22}: i.e., that
\begin{equation} \label{eq:Liar1}
 \textbf{T}(\ulcorner\phi\urcorner) \Leftrightarrow \phi~~~~~
\end{equation}
 is a theorem (true formula) in this 2-valued Peano Arithmetic FOL.

If $\phi$ is any formula with one free variable $x$, then the
\emph{diagonalization} of $\phi$, can be defined by mapping $d:\L
\rightarrow \L$, such that $d(\phi(x)) = \phi(\ulcorner
\phi(x)\urcorner)$, equivalent to the formula $(\exists y)( y =
\ulcorner \phi(x)\urcorner \wedge \phi(y))$.\\
Based on this mapping we can define the mapping $diag:N \rightarrow
N$, such that for any non-negative integer $n \in N$, $diag(n) =
\ulcorner d(\phi(x))\urcorner$ if $n = \ulcorner \phi(x)\urcorner$;
$0$ otherwise.

Now we can introduce the binary predicate $Diag(x,y)$ such that for
any two $n,k \in N$, $Diag(k,n)$ is true iff $n = diag(k) \neq 0$,
that is, if $n$ is the code of the formula obtained by diagonalization of the formula $\phi(x)$ whose code is equal to $k$.

From the Diagonalization Lemma (or Fixed Point Theorem) in Peano
Arithmetic FOL, there must be fixed point formula $\lambda$ such
that, the following formula is a theorem (true formula) in this logic:
\begin{multline} \label{eq:Liar2}
 ~~~~~~~~~~~~~\lambda \Leftrightarrow ~ \neg \textbf{T}(\ulcorner \lambda
\urcorner), ~~~~ $for$ ~\lambda ~$equal to$ ~d(\neg \textbf{T}(diag(x)))~~~~~~~~~~~~~~~~~
\end{multline}
It is easy to show that $\neg \textbf{T}(diag(x))$ is equal to the formula
 $(\exists y)( Diag(x,y) \wedge \neg \textbf{T}(y))$ with the free variable
 $x$.
 Thus,
\begin{multline} \label{eq:Liar3}
$Let~$ \phi(x) $ ~be ~equal~ to formula~$ (\exists y)( Diag(x,y) \wedge \neg \textbf{T}(y)), $~then$\\
  ~~~~\lambda $ ~is equal to the formula ~$ d(\phi(x)), $i.e, to$~(\exists z) (Diag((\ulcorner \phi(x)\urcorner,z) \wedge \neg  \textbf{T}(z))\\
  $and hence, to$~~
 (\exists z) (Diag((\ulcorner \exists y)( Diag(x,y) \wedge \neg
 \textbf{T}(y))\urcorner,z) \wedge \neg  \textbf{T}(z))~~~
\end{multline}
The analysis of the proof of this Diagonalization Lemma
\cite{BoJe98,Ketl00} shows that this formula $\lambda$ must contain
the atoms $\textbf{T}(x)$ of 2-valued  predicate $\textbf{T}$. This formula is the formal analogue of the
so-called "\emph{strengthened liar}" for this system. It is a formula that
"\emph{says of itself that it is not true}".

In fact, it expresses the Liar paradox. If we assume that $\lambda$
is true than $\neg \textbf{T}(\ulcorner \lambda \urcorner)$ must be true,
i.e. $ \textbf{T}(\ulcorner \lambda \urcorner)$ must be false and, consequently, from (\ref{eq:Liar1}) we obtain that $\lambda$ is false: the contradiction. The same contradiction we obtain if we assume that $\lambda$ is false.
\\\\
\textbf{Liar paradox in 4-valued} $\L_{in}$:\\
 Let us consider now this problem inside the intensional many-valued
FOL $\L_{in}$ defined previously by having also the unary predicate $\textbf{T}$ with sort of natural numbers the binary predicate \emph{Diag} with both arguments of the sort of natural numbers, so that the formula $\lambda$ in (\ref{eq:Liar3})  is well defined in 4-valued intensional FOL $\L_{in}$ as well.
However, while \emph{Diag} in $\L_{in}$ remains 2-valued (can be only true or false), the predicate $\textbf{T}$  in $\L_{in}$ is 4-valued and hence can have also the truth-values $\bot$ and $\top$ as well.  Thus also the formula $\neg \textbf{T}(\ulcorner \lambda
\urcorner)$  in (\ref{eq:Liar2}) can have the truth-value $\bot$ or $\top$ as well
and hence the formula (\ref{eq:Liar2}) can remain a theorem (true formula) also if $\lambda$ has the truth-value $\bot$ or $\top$. Let us show this fact:
from (\ref{eq:Liar3}) and knowledge interpretation $v^*$, we obtain

$v^*(\lambda) = v^*((\exists z) (Diag((\ulcorner \phi(x)\urcorner,z) \wedge \neg  \textbf{T}(z)))$\\  and hence from the point 3 of Definition \ref{def:MV-algebra} for existentially quantified formulae, and the fact that for the ground atoms  $v^*(Diag((\ulcorner \phi(x)\urcorner,z)/g_1)$  are false for all $g_1(z) \neq \ulcorner \phi(\ulcorner \phi(x)\urcorner)\urcorner \in N$, we obtain that,

$v^*(\lambda) = v^*(\neg\textbf{T}(\ulcorner \phi(\ulcorner \phi(x)\urcorner)\urcorner)) = \neg v^*(\textbf{T}(\ulcorner \phi(\ulcorner \phi(x)\urcorner)\urcorner)) \in X$\\
can have any truth-value in $X$, thus also $\bot$ or $\top$.

Consequently, instead of the formula
 $~\lambda \Leftrightarrow ~ \neg \textbf{T}(\ulcorner \lambda \urcorner)$ in (\ref{eq:Liar2}) for the sentence $\lambda$ defined in (\ref{eq:Liar3}), we can use the intensional abstraction operator  $\lessdot \_ \gtrdot$  and the 4-valued unary predicate $T$ which argument is the sort of abstract terms and obtain equivalent to it formula
\begin{equation} \label{eq:Liar2A}
 \lambda \Leftrightarrow ~ \neg T( \lessdot \lambda \gtrdot)
\end{equation}
Note that both unary predicates $\textbf{T}$ and $T$ have the same semantics as truth-predicates but only represent the formula $\lambda$ in two different ways: in $\textbf{T}$ is represented by its numeric G$\ddot{o}$del's code while in $T$ by its abstracted term used in intensional FOL.
However, in this case we have that the unary predicate $T$  is \emph{many-valued}
  and hence able to satisfy the condition (\ref{eq:Liar2A}) by using the truth-constraint provided in (\ref{eq:esem3s})
 \begin{equation} \label{eq:Tmv}
  v^*(T( \lessdot \phi(\textbf{x})/g \gtrdot)) = v^*(\phi(\textbf{x})/g) \in X
\end{equation}
for any sentence $\phi(\textbf{x})/g$ and valuation $v^*$.
Thus,  for each MV-interpretation $I^*_{B}$ of $\L_{in}$, we may fix $I^*_{B}(\lambda) = \{\kappa\} \notin \{\{f\}, \{t\}\}$ in the way that the  Liar sentence $\lambda$ satisfy (\ref{eq:Liar2A}),  with the many-valued unary predicate $T$:
\begin{coro} \label{coro:Liar} The Liar self-reference sentence is not a paradox in $\L_{in}$, thus, does not create the contradictions.
\end{coro}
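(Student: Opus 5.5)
To prove Corollary \ref{coro:Liar}, I will exhibit, for every knowledge interpretation $v\in\mathbb{V}$, a consistent assignment of a truth-value to $\lambda$ under which the Liar biconditional (\ref{eq:Liar2A}) is true, i.e. $v^*(\lambda \Leftrightarrow \neg T(\lessdot\lambda\gtrdot)) = t$. Since $\Leftrightarrow$ in (\ref{eq:dueM2}) is 2-valued, this holds exactly when $v^*(\lambda) = v^*(\neg T(\lessdot\lambda\gtrdot))$. So the task is to show that this fixed-point equation is solvable in $X$ for every $v$, rather than forced into the contradiction of the classical case.

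First I rewrite the right-hand side. By point 1 of Definition \ref{def:MV-algebra}, $v^*(\neg T(\lessdot\lambda\gtrdot)) = \neg v^*(T(\lessdot\lambda\gtrdot))$. By the knowledge-constraint (\ref{eq:Tmv}), which every $v\in\mathbb{V}$ satisfies by Definition \ref{def:Kmodels}, this equals $\neg v^*(\lambda)$. The condition therefore reduces to the purely algebraic equation $\kappa = \neg\kappa$ with $\kappa = v^*(\lambda)\in X$.

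Next I solve this equation in Belnap's bilattice. The involution $\neg$ swaps $t$ and $f$ and fixes $\bot$ and $\top$. Hence the solutions are exactly $\kappa\in\{\bot,\top\}$, and there is never a solution in $\{f,t\}$. This matches the computation just before the statement, where $v^*(\lambda) = \neg v^*(\textbf{T}(\ulcorner\phi(\ulcorner\phi(x)\urcorner)\urcorner))$ ranges over $X$. The quantified part of $\lambda$ in (\ref{eq:Liar3}) collapses to a single instance because $Diag$ is 2-valued and false at every other point, and a false conjunct kills its disjunct under $\bigvee$. Following Definition \ref{def:paradox} and the text preceding the corollary, I fix $I^*_B(\lambda)=\{\top\}$, that is $v^*(\lambda)=\top$, as a built-in value invariant over $\mathbb{V}$. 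This choice is compatible with the Herbrand extension, since $\lambda$ is not an atom and its value is determined through $T$ by the constraint itself. With it, (\ref{eq:Liar2A}) evaluates to $\top\Leftrightarrow\top = t$ in every knowledge interpretation, so the Liar biconditional is a valid (true) sentence.

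Finally, I show that no contradiction propagates. Because $\neg$ fixes $\top$, neither $\lambda$ nor $\neg\lambda$ becomes true or false. The valid entailment of Definition \ref{def:seqSatisf}, together with MV-Modus Ponens (Corollary \ref{coro:MP}, Case 3), only lets an inconsistent premise yield conclusions with value in $\{\top,t\}$, so $\lambda$ does not trivialize the theory. The main obstacle is justifying that the value of $\lambda$ may consistently be fixed as above. $\lambda$ is defined through $T$, whose value depends on $\lambda$, so this is a genuine fixed point. I would argue it via monotonicity of $\neg$ in $\leq_k$, which guarantees a least fixed point ($\bot$) and a greatest one ($\top$) of $\kappa\mapsto\neg\kappa$ by Knaster–Tarski on the complete lattice $(X,\leq_k)$. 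Either fixed point yields a coherent MV-interpretation, and choosing $\top$ agrees with the paper's convention for paradoxes.
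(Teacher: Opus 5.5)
Your proposal is correct and follows essentially the paper's argument: the truth-constraint (\ref{eq:Tmv}) turns (\ref{eq:Liar2A}) into the fixed-point equation $\kappa=\neg\kappa$, and $\kappa=\top$ solves it. The paper does the same computation at the level of MV-interpretations, $I^*_{B}(\neg T(\lessdot \lambda \gtrdot))=\oslash(\{\top\})=\{\top\}=I^*_{B}(\lambda)$, whereas you work directly with $v^*$ (which the paper itself notes is equivalent) and add the $\bot$ solution and the non-explosion remark, neither of which the paper's proof needs.
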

\textbf{Proof:} Note that from Definition 8 in \cite{Majk25}, for a relation $R\in \mathfrak{Rm}$ with arity $ar(R) =1$, for the unary operator $~\oslash:\mathfrak{Rm}\rightarrow \mathfrak{Rm}$,
 $~\oslash(R) =  \{\neg a  ~|~ \{a \} = R, ~\neg a \neq \bot \} $.

Thus, for any  MV-interpretation $I^*_{B}$, such that
$I^*_{B}(\lambda) = \{\kappa\}$ (or an equivalent many-valued
interpretation $v^*$ such that $v^*(\lambda) = \kappa$), such that $\neg \kappa = \kappa$, which is satisfied for $\kappa =\top$ ("inconsistent" truth-value, both true and false) in the lattice of truth values $X$,  we have that $I^*_{B}(\neg T(\lessdot \lambda \gtrdot)) = \oslash I^*_{B}(T(\lessdot \lambda \gtrdot)) =$ from (\ref{eq:Tmv}) $=\oslash (\{\top\}) =  \{\top\} = I^*_{B}(\lambda)$ so that (\ref{eq:Liar2A}) (that is, the (\ref{eq:Liar2})) are true formulae. Thus, in $\L_{in}$ the  value to $\lambda$ is assigned without any contradiction.
\\$\square$\\
Analogously, the G$\ddot{o}$del's contradictory formulae obtained by
diagonalization of the formula $\neg Prov(diag(x))$, where $Prov(x)$
is a formula representing provability in the Peano arithmetic FOL can be resolved in $\L_{in}$ by interpreting these formulae with the  value $\kappa = \bot$ ("unknown" truth-value) of Belnap's bilattice. \index{Belnap's bilattice}

\section{Conclusions and Future Work}

By proposed four-level cognitive robot's structure, IFOL allow robots to reflect on, reason about, anticipate, or simply imagine scenes, situations, and developments within in a highly flexible, compositional, that is, semantically meaningful manner. As a result, IFOL enables the robots to actively infer highly flexible and adaptive goal-directed behavior under varying circumstances \cite{Russ20}.\\

\emph{What psychological and philosophical significance should we attach to recent efforts at computer simulations of
human cognitive capacities? In answering this question, I find it useful to distinguish what I will call "strong" AI
from "weak" or "cautious" AI (Artificial Intelligence). According to weak AI, the principal value of the computer
in the study of the mind is that it gives us a very powerful tool. For example, it enables us to formulate and test
hypotheses in a more rigorous and precise fashion. \\But according to strong AI, the computer is not merely a
tool in the study of the mind; rather, the appropriately programmed computer really is a mind, in the sense that
computers given the right programs can be literally said to understand and have other cognitive states. In strong
AI, because the programmed computer has cognitive states, the programs are not mere tools that enable us to
test psychological explanations; rather, the programs are themselves the explanations.}  In Minds, brains, and programs, John R.Searle, \cite{Sear80}\\

So, in our proposal, the \emph{cognitive states} of AGI robots are defined by extensions of its knowledge Database for intensional concepts and known sentences represented by the ground atoms of the $Know$ predicate in the actual knowledge model. So, mental content od an AGI robot is represented by its memory as a conscious cognitive state and by the grounding of the intensional language concepts to all kind of robot's processes realized by algorithms, neural networks and other computational and programming methods like subsumption architecture  of Rodney Brooks  \cite{Broo99} based on layers and augmented finite-state machines.

IFOL-based approach is part of a general neuro-symbolic AI paradigm, where researchers aim to blend: Neural methods (e.g., deep learning) for pattern recognition and learning from data, and symbolic logic for structured knowledge representation and reasoning. This broader field (which includes work at IBM Research, MIT, and in academic surveys) seeks to build AI systems that can both \emph{learn from experience} and \emph{reason abstractly}. These ideas  contribute to ongoing discussions in AI about \emph{symbol grounding}, \emph{logical inference}, and \emph{self-referential reasoning} — all important for Strong AI research.

The future work will be to extend the $IFOL_B$ to probabilistic reasoning and to consider an implementation of the Symbol Grounding Problem
(SGP) for the symbolic system of $IFOL_B$. For example, eight proposed strategies for solving the SGP, which was given its classic formulation in Harnad are provided in \cite{MaFl05}.
The SGP concerns the possibility of specifying precisely how a
robot can autonomously elaborate its own semantics for the symbols that it manipulates and do so from scratch, by interacting with its environment and other autonomous agents. This means that, as Harnad rightly emphasises \cite{Harn90}, the interpretation of the symbols must be intrinsic to the symbol system itself, it cannot be extrinsic. This is provided in $IFOL_B$ by the fact that this logic language support both the truth and the meaning by the PRP concepts.
\newpage

\newpage
\section{Appendix: Neuro-symbolic Archetuture}
. \\\\\\
 \begin{figure}
$\vspace*{-44mm}$
\centering{
 \includegraphics[scale= 1.0]{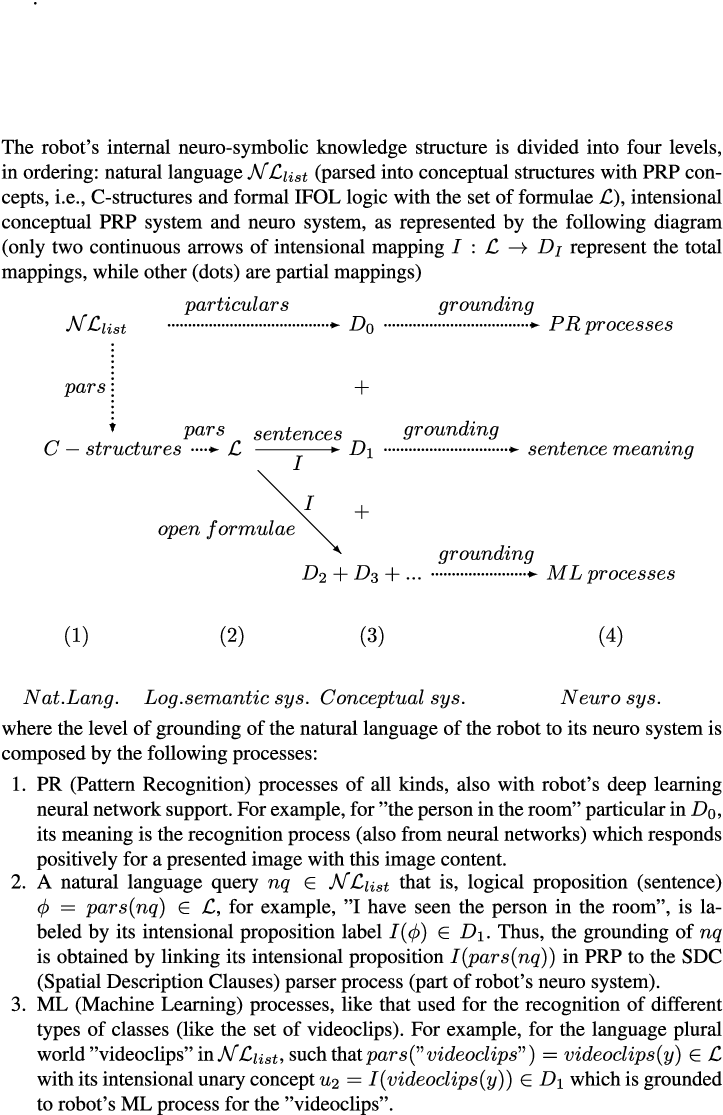} }
  $\vspace*{-11mm}$
 \end{figure}
\newpage
In this research, bridging the gap between Neural Networks (sub-symbolic) and Logic (symbolic) is not just about making them work side-by-side; it is about creating a formal translation layer where the robot's "internal feelings" (neuron firings) become "meaningful concepts" (logical terms).
We achieve this via a Dual-System Architecture grounded in Intensional First-Order Logic (IFOL):
\begin{enumerate}
  \item \textbf{System 1: The "Neuronal Grounding" Layer};\\
System 1 consists of Deep Learning and Neural Networks. For us, this layer is responsible for pattern recognition and sensory-motor coordination.\\
- \textbf{The Output}: Instead of just outputting a label like "Blue," the neural network produces a specific internal state (a vector or firing pattern).\\
- \textbf{The Problem}: On its own, the neural network doesn't "understand" the concept; it just reacts. This is where the gap exists.
\\
  \item \textbf{The Bridge: Intensional Abstraction}:\\
This is our "secret sauce." We use intensional abstraction to turn the complex activity of the neural network into a "Logic Object."\\
- \textbf{Mapping}: Each distinct neural pattern is mapped to a specific intensional entity in the IFOL.\\
- \textbf{Meaning vs. Reference}: The "meaning" (intension) of a word like heavy is the specific neural experience of the robot's motors straining. The "reference" (extension) is the actual physical object being lifted.\\
- \textbf{Self-Labeling}: The robot uses autoepistemic reasoning to say, "I am currently experiencing Neural State X, which I have mapped to the concept 'Heavy'."
\\
  \item \textbf{System 2: The "Autoepistemic Logic" Layer}:\\
Once the neural patterns are abstracted into logical symbols, System 2 takes over. This is the deliberative, slow-thinking part of the brain.\\
- \textbf{High-Level Planning}: Because these symbols are now part of a formal logic (IFOL), the robot can use them in complex "If-Then" scenarios that neural networks struggle with.\\
- \textbf{Predefined set of axioms and prohibitions}: Any kind of reasoning results and plans must satisfy these axioms before the executions of actions derived from such plans. In this way the robots can not become danger for the humans and the environment in which they are active.\\
- \textbf{Feedback Loop}: System 2 can "query" System 1. For example, the symbolic layer might ask, "Check the visual sensors again; does that object match the 'Bird' intension?" This forces the neural network to re-process data based on logical needs.
\\
\item \textbf{Solving the "Black Box" Problem}:\\
One of the most interesting aspects of this bridge is Explainability.\\
- In standard neural networks, we don't know why a robot chose an action.\\
- In this framework, because every neural state is linked to an intensional logic term, the robot can provide a symbolic trace of its reasoning: "I performed Action A because my neural sensors triggered the 'Danger' intension, which my logic defines as a state to be avoided".
\end{enumerate}
\textbf{Summary Table:} The neuro-symbolic's Bridge\\
\begin{tabular}{|c|c|c|c|}
  \hline
    & \textbf{Sistem 1 (Neural)} & \textbf{The Bridge (Intensional)} & \textbf{System 2 (Symbolic)} \\
    \hline
  \textbf{Function} & 	Fast, reactive sensing & Abstraction $\&$ Grounding & Slow, logical planning \\
  \hline
  \textbf{Data Type} & Vectors / Tensors & Intensional Entities & Logic Formulas \\
  \hline
  \textbf{Role} &Perceives the world & Maps neurons to symbols & Reasons about perceptions \\
  \hline
  \textbf{Self-awareness} & Unconscious & The "\textbf{I}" links the two & Conscious autoepistemic \\
  & & & thought \\
  \hline
\end{tabular}
\end{document}